\newtheorem{theorem}{Theorem}
\newtheorem{lemma}{Lemma}
\newtheorem{assumption}{Assumption}
\definecolor{RED}{rgb}{0.6,0.,0.}
\definecolor{BLUE}{rgb}{0.,0.,0.6}
\definecolor{GREEN}{rgb}{0.,0.6,0.}
\definecolor{MALINA}{rgb}{0.6,0.,0.6}
\definecolor{YELLOW}{rgb}{0.8,0.8,0}
\newcommand{\createspace}{\vspace{1.5 mm}}
\newcommand{\snr}{\textsc{SNR}}
\newcommand{\s}{\textsc{s}}
\begin{document}
\bstctlcite{IEEEexample:BSTcontrol}
\title{Tractable learning in under-excited power grids}
\author{
\IEEEauthorblockN{Deepjyoti Deka$^{\dag}$, Harish Doddi$^{\ddag}$, Sidhant Misra$^{\dag}$, Murti Salapaka$^{\ddag}$\\
\IEEEauthorblockA{(${\dag}$) Theoretical Division, Los Alamos National Laboratory, New Mexico, USA \\
(${\ddag}$) Department of Electrical and Computer Engineering, University of Minnesota, Minneapolis, USA}
}}

\maketitle
\begin{abstract}
Estimating the structure of physical flow networks such as power grids is critical to secure delivery of energy. This paper discusses statistical structure estimation in power grids in the `under-excited' regime, where a subset of internal nodes do not have external injection. Prior estimation algorithms based on nodal potentials/voltages fail in the under-excited regime. We propose a novel topology learning algorithm for learning under-excited general (non-radial) networks based on physics-informed conservation laws. We prove the asymptotic correctness of our algorithm for grids with non-adjacent under-excited internal nodes. More importantly, we theoretically analyze our algorithm's efficacy under noisy measurements, and determine bounds on maximum noise under which asymptotically correct recovery is guaranteed. Our approach is validated through simulations with non-linear voltage samples generated on test grids with real injection data. 
\end{abstract}
\begin{IEEEkeywords}
Flow conversation, Inverse graph Laplacian, noisy regression, Matpower, inverse covariance graph, distribution grid. 
\end{IEEEkeywords}

\section{Introduction}
\label{sec:intro}
Topology estimation is a crucial part of situational awareness in power grids, which is the backbone for energy needs in the modern economy \cite{dekatcns}. Operationally the power grid can be distinguished between high voltage transmission grids and low voltage distribution grids \cite{kersting2001radial}. Structurally, the transmissions are meshed/loopy (have cycles), while distributional grids are primarily radial (tree-like). In either setting, the grid topology is established by the status of breakers/switches on an underlying set of permissible lines/edges as shown in Fig.~\ref{fig:city}. In the modern grid, presence of stochastic resources such as solar energy, wind farms, and aggregated controllable loads have made topology and state estimation with guarantees of paramount importance for secure operation. In the absence of sufficient line measurements \cite{hoffman2006practical}, in recent years, researchers \cite{arghandeh2017big} have proposed the use of nodal voltage measurements collected from new, high-fidelity nodal measurement devices such as PMUs \cite{PMU}, micro-pmus \cite{micropmu}, and advanced metering equipment to aid in topology estimation. Among algorithms with theoretical guarantees, passive greedy schemes \cite{dekatcns,sandia1,sejunpscc} and active probing methods \cite{cavraro2018graph,cavraro2019inverter} that employ voltage measurements are proposed for learning radial grids. However, they do not generalize to meshed/loopy networks in transmission grids and urban distribution grids \cite{NY,germany}. Similarly, algorithms for joint estimation of topology and line parameters using both voltage and injection measurements have been designed using least-squared fitting \cite{yu2018patopaem} and low-ranked decomposition \cite{yuan2016inverse}. While these works are related, in this article we focus on the problem of topology recovery using only nodal voltage measurements without any additional measurement or statistics of nodal injections. In a related line of work, consistent algorithms based on inverse covariance or mutual independence (MI) of nodal voltages are used for learning grid topology in both radial \cite{he2011dependency,bolognani2013identification,dekathreephase}, and loopy settings \cite{liao2018urban,dekaloopy}. Its theoretical generalizability notwithstanding, the validity of inverse voltage covariance or MI based methods rely on a key assumption: all nodes/buses are excited, i.e., have non-zero injection fluctuations. This is necessary to ensure that the voltage covariance matrix is full-ranked and hence invertible. While terminal nodes in grids generally have loads or generation, an unidentified set of internal nodes with zero injection may be present in realistic transmission and distribution grids for routing power flows \cite{kersting2001radial}. In such setting, voltage measurement based learning algorithms proposed in prior work do not extend theoretically. The overarching goal of this work is to overcome this drawback and develop a provably correct topology learning algorithm using only \emph{nodal voltages} from a \emph{general (loopy)} power grid, where an \emph{unidentified} subset of internal nodes may be \emph{unexcited}. As linearized power flow models are examples of structural equation models (SEM) \cite{bollen2005structural}, our work enables tractable learning for SEMs satisfying conservation laws, in the under-excited regime, using only state samples. Extensions in systems beyond power grids are the objective of our future work.

\subsection{Contribution}
Our learning algorithm is based on the physics of linearized power flows. As the voltage covariance matrix is rank-deficient in the under-excited regime, we propose a two-step learning approach. In the \emph{first} step, we develop a regression based test that identifies the internal nodes of zero injection, and subsequently learns their true neighbors. In the \emph{second} step, we extract the voltage covariance pertaining only to nodes with non-zero injection, and estimate the rest of the network. We prove that under noiseless measurements, our algorithm correctly estimates the underlying topology of radial grids, provided under-excited internal nodes are non-adjacent. For loopy grids of minimum loop-size four, our algorithm ensures correct learning provided loops of size four do not include zero-injection buses. Note that the minimum loop size of four is non-restrictive as real grids have large girth \cite{NY}, and is necessary for correct estimation even in the fully-excited setting \cite{dekaloopy}. We show the necessity of our assumptions for unique and consistent recovery, through multiple counter examples. Under noisy data, we determine the maximum noise margin for consistency of each step of our learning algorithm, and express this margin as an intuitive function of system parameters. We validate the developed algorithms through experiments on both linear and non-linear voltage samples collected from test grids with real injection data. 

The next section discusses power flows in grids and related voltage statistics. Section~\ref{sec:zero} discusses properties of nodal voltages which is used to extract zero-injection buses and their neighbors. Section~\ref{sec:non-zero} determines edges between non-zero injection buses and completes the topology estimation. The analysis under noise for each step is provided in its corresponding section. Section~\ref{sec:simulations} includes simulations results on IEEE test networks, using non-linear samples generated using real injection data. Conclusions and future work, are included in Section~\ref{sec:conclusion}.
\begin{figure}[thb]
\centering
\includegraphics[width=0.45\textwidth]{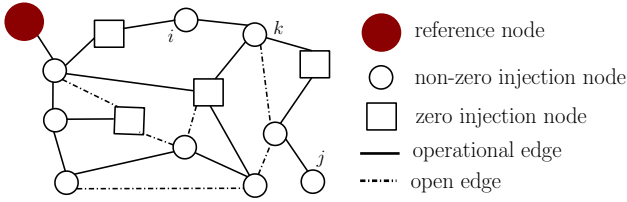}
\caption{Loopy power grid with internal ($i,k$) and terminal ($j$) nodes.
\label{fig:city}}
\vspace{-3mm}
\end{figure}

\section{Preliminaries}
\label{sec:structure}
\subsection{Notation}
We denote the power grid topology by an undirected graph $\mathcal{G}\equiv(\mathcal{V},\mathcal{E})$, where $\mathcal{V} =\{i,j,..\}$ is the set of $N+1$ nodes/nodes, and $\mathcal{E} = \{(ij), (jk),...\}$ is the set of operational lines/edges (see Fig.~\ref{fig:city}). Two nodes that share an edge are termed `neighbors'.
The degree of a node is the total number of edges that involve the node. A node with degree $1$ is termed a `terminal node', and an `internal' node otherwise. A `loop' of size $l \geq 3$ refers to a set of distinct edges $\{(ij_1),(j_1j_2),...,(j_{l-1}i)\}$. A radial (loop-less) graph is defined to have loops of minimum size $\infty$. Each edge $(ij)$ in grid $\mathcal{G}$ is associated with susceptance $\beta_{ij} >0$, and conductance $g_{ij}>0$. Each node $i$ has a corresponding voltage magnitude $v_i$, phase angle $\theta_i$, active power $p_i$ and reactive power $q_i$. We use $v,\theta,p,q$ to denote the corresponding vectors with entries for each node. 

\subsection{Power grid basics}
The Kirchoff's law for AC power flow (AC-PF) is given by:
\begin{align}
\forall i \in \mathcal{V}, ~~p_i+\hat{i} q_i= \smashoperator[lr]{\sum\limits_{j:(ij)\in\mathcal{E}}}(v_i^2-v_i v_je^{\hat{i}\theta_i-\hat{i}\theta_j})(g_{ij}+\hat{i}\beta_{ij}). \label{P-complex1}
\end{align}
Considering small deviations in $v$ from a reference bus voltage and small differences in $\theta$ between neighboring nodes, the non-linear AC-PF can be linearized in $v,\theta$ to get the \emph{Linearized AC} power flow (LC-PF) \cite{dekatcns,89BWc,bolognani2016existence} whereby, 
\begin{align} 
p+\hat{i}q = H_{g+\hat{i}\beta}(v-\hat{i}\theta)\text{ with }\begin{bmatrix}p \\q\end{bmatrix} = \begin{bmatrix}H_g &H_{\beta}\\ H_{\beta} &-H_{g}\end{bmatrix}\begin{bmatrix}v \\\theta\end{bmatrix}. \label{LC_PF}
\end{align}
The susceptance weighted Laplacian matrix $H_{\beta}$ is given by
\begin{align}
 H_\beta(i,j) = \begin{cases}\sum_{k:(ik) \in {\cal E}}\beta_{ik} ~~\text{if~} i=j\\
-\beta_{ij} ~~\text{if~} (ij) \in {\cal E}\\
 0 ~~\text{otherwise}\end{cases}.\label{Laplacian}
\end{align}
$H_g$ is defined similarly. If voltage magnitude deviations are negligible, or lines are primarily susceptive, the linear \emph{DC power flow} (DC-PF), is derived \cite{89BWc}: 
\begin{align}
 p_i = \smashoperator[lr]{\sum_{j:(ij)\in \mathcal{E}}}\beta_{ij}(\theta_i-\theta_j)\Rightarrow p = H_\beta\theta. \label{DC_PF}
\end{align}
Note that both LC-PF and DC-PF models are lossless. One node's voltage can be taken as reference to measure all other voltages, while its injection is given by the negative sum of all other injections. The PF models can then be reduced to the $N$ non-reference nodes by removing entries in $p,q,v,\theta$, and rows and columns in $H_{\beta}, H_{g}$ that correspond to the reference node. From this point onwards, we use $p,q,v,\theta,H_{\beta}, H_g$ to refer to their reduced versions. As reduced $H_{\beta},H_{g}$ have full rank, LC-PF and DC-PF models are invertible. In the rest of the article, we focus on DC-PF for our theoretical results. Our analysis naturally extends to the LC-PF model, as noted in subsequent sections. Furthermore, our simulations results demonstrate the performance of our learning algorithm on non-linear AC-PF voltage measurements.

\subsection{Structure recovery problem}
The structure recovery problem pertains to estimating the set of edges $\mathcal{E}$ of $\mathcal{G}$ using the measurements $\{\theta_{i}^t, \ i \in \mathcal{V}, \ t=1,\ldots,T\}$ of the phase angles (and voltages for the LC-PF model). For the theoretical analysis, we consider the asymptotic setting $T \rightarrow \infty$, that ensures sufficient number of collected voltage observations for correct estimation of the second order statistics of the nodal voltages. Our simulation results demonstrate the performance in the low-sample regime.
In an under-excited grid $\mathcal{G}$, we denote 
the set of internal nodes with zero-injection by $\mathcal{U}$, and the remaining nodes in $\mathcal{V}-\mathcal{U}$ by ${\mathcal{U}_c}$, respectively. Without a loss of generality, we consider the nodes numbered $1$ to $|\mathcal{U}|$ ($|\mathcal{U}| $ denotes the cardinality of set $\mathcal{U}$) as zero-injection and rest as non-zero injection nodes. Thus, we partition $p = \begin{bmatrix}\mathbf{0}\\p^{\mathcal{U}_c}\end{bmatrix}$, and $\theta = \begin{bmatrix}\theta^\mathcal{U}\\\theta^{{\mathcal{U}_c}}\end{bmatrix}$. We assume the following regarding covariance nodal injections:

\begin{assumption} \label{a:uncorrelated_p}
Fluctuations in $p^{\mathcal{U}_c}$ are uncorrelated:
\begin{align*}
  \Sigma_{p} = \begin{bmatrix}
 0~&0\\0&\Sigma_{p^{\mathcal{U}_c}}
 \end{bmatrix}
\end{align*}
and $\Sigma_{p^{\mathcal{U}_c}}$ $=\mathbb{E}[p^{\mathcal{U}_c}{p^{\mathcal{U}_c}}^T]\succ0$ is diagonal.
\end{assumption}
Assumption~\ref{a:uncorrelated_p} is used in prior work \cite{he2011dependency,bolognani2013identification,dekathreephase} for tractable grid learning. Further it also holds for real loads and renewables \cite{liao2018urban,yury}, when small linear trends are empirically de-trended. While Assumption~\ref{a:uncorrelated_p} is used for proving the correctness of a part of our algorithm (Section~\ref{sec:non-zero}), we consider real correlated injection data in our numerical simulations to demonstrate accurate learning, in its absence.

The relevant statistics is the covariance matrix of the observables $ \Sigma_{\theta}$, which under the DC-PF~\eqref{DC_PF} is given by 
\begin{align}
 \Sigma_{\theta} = \mathbb{E}[\theta\theta^T] = J_{\beta} \begin{bmatrix}
 0~&0\\0&\Sigma_{p^{\mathcal{U}_c}}
 \end{bmatrix}J_{\beta}, \text{~where~} J_\beta = H^{-1}_\beta. \label{covar_DC}\end{align}
A similar formula for the voltage covariance $\Sigma_{(v,\theta)} = \begin{bmatrix}\Sigma_{v}~&\Sigma_{v\theta}\\\Sigma_{\theta v}~&\Sigma_{\theta}\end{bmatrix}$ can be derived for the LC-PF~\eqref{LC_PF}. 
\subsection{The fully excited case}
If $\mathcal{U}$ is empty, $\Sigma_p$ and $\Sigma_\theta$ are invertible, and 
\begin{align}
&\Sigma^{-1}_{\theta} = H_{\beta}\Sigma^{-1}_{p}H_{\beta}, \text{~where using~\eqref{Laplacian}, we verify that}\nonumber\\
&\Sigma^{-1}_{\theta}(i,j)\coloneqq \begin{cases}&\hspace{-9pt}<0~~\text{if $(ij) \in \mathcal{E}$, and $\forall k, (ik),(jk) \notin \mathcal{E}$}\label{DC_full_inv}\\
&\hspace{-9pt}>0~~\text{if $(ij) \notin \mathcal{E}$, and $\exists k$, $(ik),(jk) \in \mathcal{E}$}\\
 &\hspace{-9pt}=0 ~~\text{if $(ij) \notin \mathcal{E}$, and $\forall k, (ik),(jk) \notin \mathcal{E}$.}\end{cases}
\end{align}
Note that if $i,j$ are not part of a loop of size three, then they must exist in one of the configurations in Eq.~\eqref{DC_full_inv}. The following thus holds.
\begin{lemma}\cite{dekaloopy}\label{lemma:full_inv}
Let nodes $i,j$ in fully-excited $\mathcal{G}$ not be part of a three-node loop. Under Assumption~\ref{a:uncorrelated_p}, edge $(ij)$ exists if and only if $\Sigma^{-1}_{\theta}(ij) <0$.
\end{lemma}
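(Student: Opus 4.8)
The plan is to read the sign of the off-diagonal entry $\Sigma^{-1}_{\theta}(i,j)$ directly off the factorization $\Sigma^{-1}_{\theta} = H_{\beta}\Sigma^{-1}_{p}H_{\beta}$, which is available because fully-excited $\mathcal{G}$ renders both $\Sigma_{p}$ and $\Sigma_{\theta}$ invertible. Writing $\Sigma^{-1}_{p} = \operatorname{diag}(d_1,\ldots,d_N)$ with every $d_k>0$ by Assumption~\ref{a:uncorrelated_p}, I would expand, for $i\neq j$,
\begin{align*}
\Sigma^{-1}_{\theta}(i,j) = \sum_k d_k\, H_{\beta}(i,k)\,H_{\beta}(k,j).
\end{align*}
A term survives only when both factors are nonzero, so the sum collapses to three types of contributions: $k=i$ and $k=j$, each proportional to $H_{\beta}(i,j)$ and hence nonzero exactly when $(ij)\in\mathcal{E}$; and $k\notin\{i,j\}$, nonzero exactly when $k$ is a common neighbor of $i$ and $j$.

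First I would settle the signs. When $(ij)\in\mathcal{E}$, the two surviving terms $d_i H_{\beta}(i,i)H_{\beta}(i,j)$ and $d_j H_{\beta}(i,j)H_{\beta}(j,j)$ are both strictly negative, since the diagonal Laplacian entries are positive and $H_{\beta}(i,j)=-\beta_{ij}<0$. When $k$ is a common neighbor, its contribution $d_k(-\beta_{ik})(-\beta_{kj})$ is strictly positive. This reproduces exactly the three-way classification asserted in~\eqref{DC_full_inv}: edge with no common neighbor yields a strictly negative entry; no edge but a common neighbor yields a strictly positive entry; and no edge with no common neighbor yields a zero entry. Crucially, within each of these three regimes the surviving terms all carry the \emph{same} sign, so positivity of the $d_k$ rules out any accidental cancellation and the sign is genuinely definite rather than merely generic.

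The crux of the argument is that these three regimes are \emph{exhaustive} precisely under the hypothesis. The one remaining configuration---$(ij)\in\mathcal{E}$ \emph{together with} a common neighbor $k$---is exactly the statement that $i,j,k$ form a loop of size three, which is forbidden. This is the step I expect to be the real obstacle, because it is in this excluded case alone that the negative edge terms ($k=i,j$) and the positive common-neighbor terms compete, so the sign of $\Sigma^{-1}_{\theta}(i,j)$ would no longer determine edge presence. Ruling it out via the no-three-node-loop assumption is therefore what makes the test exact. With that case removed, $\Sigma^{-1}_{\theta}(i,j)<0$ occurs if and only if we are in the edge-with-no-common-neighbor regime, and conversely every edge $(ij)$ on no triangle has no common neighbor and hence produces a strictly negative entry; this establishes the biconditional.
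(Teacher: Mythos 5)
Your proposal is correct and follows essentially the same route as the paper: the paper also reads the sign of $\Sigma^{-1}_{\theta}(i,j)$ off the factorization $\Sigma^{-1}_{\theta} = H_{\beta}\Sigma^{-1}_{p}H_{\beta}$ to obtain the three-way sign classification in Eq.~\eqref{DC_full_inv}, and then observes that excluding three-node loops makes those three configurations exhaustive, which is exactly your exactness argument. Your write-up merely spells out the term-by-term expansion and cancellation-free sign analysis that the paper leaves as a one-line verification.
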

\textbf{Note:} 
In the LC-PF model \cite{dekaloopy}, under faithfullness assumptions, it can be shown that if $i,j$ are not in a three-node loop, $(ij)$ exists \textit{iff} $\Sigma^{-1}_{(v,\theta)}(i,j)+\Sigma^{-1}_{(v,\theta)}(i+N,j+N) <0$.
 
If the minimum loop size for grid $\mathcal{G}$ is four, no node pair exists in a three-node loop. Thus, Lemma~\ref{lemma:full_inv} enables \emph{correct topology estimation in the fully-excited setting} \cite{bolognani2013identification,dekaloopy}, under both DC-PF and LC-PF. Noteably, Lemma~\ref{lemma:full_inv} fails in the under-exited regime as $\Sigma_{\theta}$ or $\Sigma_{(v,\theta)}$ are not invertible. The rest of the paper discusses tractable learning in this under-excited setting. To learn grids with zero-injection node set $\mathcal{U}$, we propose a two-step topology learning algorithm. First, we identify zero injection nodes and their neighbors. Next, we estimate edges between node pairs with non-zero injection. The next section presents a regression-based framework for the first step.

\section{Identifying zero-injection nodes and neighbors}
\label{sec:zero}
We first start with stating the required assumption for structure recovery in the under-excited setting.

\subsection{Structural Assumptions} 
Terminal nodes in power grids have loads or generators and generally have non-zero injection. For theoretical completion, consider a zero-injection terminal node $i$ with unique neighbor $j$ in $\mathcal{G}$. From Eq.~\eqref{DC_PF}, $\theta_i =\theta_j$. Thus any voltage-based learning algorithm can estimate $\mathcal{G}$ only up to a permutation between $i,j$, as $i,j$ can be interchanged without affecting the algorithm's output. However the pair $i,j$ can be checked using $(\theta_i-\theta_j)^2=0$. We thus identify node groups with overlapping phase angles. To enable unique estimation, we retain one node per group as neighbor and remove the rest labelled as terminal zero-injection nodes, before estimating the rest of the network. This, along with another structural assumption required for our regression approach, is stated below.
\begin{assumption} \label{a:terminal_node}
In grid $\mathcal{G}$, all zero-injection nodes are internal (degree $>1$) and non-adjacent.
\end{assumption}
The non-adjacency assumption prevents cases where a large fraction/majority of internal nodes have no injection. We subsequently show through counter examples, that in the absence of Assumption \ref{a:terminal_node}, both zero-injection buses and their neighbors may be incorrectly identified. We first consider the setting with noiseless voltage measurements.

\subsection{Noiseless setting}
To begin our analysis, we partition $H_\beta$ ($J_{\beta}$) into rows (columns) corresponding to $\mathcal{U}$ and $\mathcal{U}_c$:
$H_\beta= \begin{bmatrix}H^{\mathcal{U}}_\beta\\H^{\mathcal{U}_c}_\beta\end{bmatrix}$, $J_{\beta}= [J^{\mathcal{U}}_{\beta}~|~J^{\mathcal{U}_c}_{\beta}]$. As $H_\beta J_\beta = \mathbb{I}$, $ H^{\mathcal{U}}_\beta J^{\mathcal{U}_c}_\beta = \mathbf{0}$. Thus,
\begin{align}
y^TJ^{\mathcal{U}_c}_\beta = 0 \iff y^T = c^TH^{\mathcal{U}}_\beta \text{~for some $c$}.\label{nullspace}
\end{align}
\eqref{nullspace} follows from the fact that both rows-space of $H^{\mathcal{U}}_\beta$ and left null-space of $J^{\mathcal{U}_c}_\beta$ have rank $|\mathcal{U}|$. Following Eq.~\eqref{Laplacian} and Assumption \ref{a:terminal_node}, the matrix $H^{\mathcal{U}}_\beta$ satisfies for the following properties:
\begin{subequations}\label{H_dd}
\begin{align}
  &H^{\mathcal{U}}_\beta(i,i) > 0, \quad \forall i \in \mathcal{U} \\
  &H^{\mathcal{U}}_\beta(i,j) = 0, \quad \forall i \in \mathcal{U}, \ \forall j \in \mathcal{U}-\{i\} \\
  &H_\beta(i,k)\leq 0, \quad \forall i \in \mathcal{U}, \ \forall k \in \mathcal{U}_c \\
  &\forall i \in \mathcal{U}, \ \exists \{k_1^i, k_2^i\} \in \mathcal{U}_c \ \mbox{such that} \nonumber \\
  &\qquad H_\beta(i,k) = -\beta_{ik}<0, \ k \in \{k_1^i, k_2^i\}.
\end{align}
\end{subequations}


Let $\theta_{-i}$ represent the vector of phase angles at all nodes but $i$. Consider the following regression problem, solved for each $i \in \mathcal{V}$:

\createspace\noindent \textbf{Nodal Regression:}
\begin{align} \label{P:regression_nonoise_1}
  a^{i*} = \arg\smashoperator[lr]{\min_{x \in\mathbb{R}^{N-1}}} \mathbb{E}[(\theta_i -\theta^{T}_{-i}x)^2]\text{~s.t.~} x \geq \mathbf{0},~\mathbf{1}^Tx \leq 1.
\end{align}
The next result shows that~\eqref{P:regression_nonoise_1} can identify nodes in $\mathcal{U}$.


\begin{theorem}\label{thm:U_identify_1}
In grid $\mathcal{G}$, under Assumption~\ref{a:terminal_node}, $i$ is a zero-injection bus in $\mathcal{U}$ if and only if the minimum value for Problem~\eqref{P:regression_nonoise_1} is zero.
\end{theorem}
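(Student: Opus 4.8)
The plan is to prove both directions of the equivalence by exploiting the conservation law \eqref{DC_PF} together with the null-space identity \eqref{nullspace}, working with the population form of the objective. Writing $y\in\mathbb{R}^{N}$ for the vector with $y_i=1$ and $y_j=-x_j$ for $j\neq i$, the objective of \eqref{P:regression_nonoise_1} equals $\mathbb{E}[(\theta_i-\theta_{-i}^Tx)^2]=\mathbb{E}[(y^T\theta)^2]=y^T\Sigma_\theta y$, and the constraint set $\{x\ge\mathbf 0,\ \mathbf 1^Tx\le 1\}$ translates into $\{y_i=1,\ y_j\le 0\ \forall j\neq i,\ \sum_{j\neq i}(-y_j)\le 1\}$. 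Since the objective is nonnegative, its minimum equals zero precisely when some feasible $y$ satisfies $y^T\Sigma_\theta y=0$. This reformulation is what lets me trade an optimization statement for an algebraic one.

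For the forward direction, suppose $i\in\mathcal U$. Then $p_i=0$ in \eqref{DC_PF} gives $\theta_i=\sum_{j:(ij)\in\mathcal E}w_{ij}\theta_j$ with $w_{ij}=\beta_{ij}/\sum_{k}\beta_{ik}$. Because each $\beta_{ij}>0$, these weights are nonnegative and sum to exactly one, so setting $x_j=w_{ij}$ for neighbors of $i$ and $x_j=0$ otherwise yields a feasible point for which $\theta_i-\theta_{-i}^Tx\equiv 0$; the objective is therefore zero, which is then the minimum. This half only uses the single-node flow balance and requires no further structure.

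For the converse I argue by contraposition: assuming $i\in\mathcal U_c$, I show every feasible $y$ has $y^T\Sigma_\theta y>0$. Suppose instead $y^T\Sigma_\theta y=0$ for some feasible $y$. Using \eqref{covar_DC} and the positive-definiteness $\Sigma_{p^{\mathcal U_c}}\succ 0$, the quadratic form vanishes only if $(J^{\mathcal U_c}_\beta)^Ty=0$, i.e.\ $y^TJ^{\mathcal U_c}_\beta=0$; by \eqref{nullspace} this is equivalent to $y^T=c^TH^{\mathcal U}_\beta$ for some $c$. I then read off signs from \eqref{H_dd} and Assumption~\ref{a:terminal_node}. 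Non-adjacency makes $H^{\mathcal U}_\beta$ vanish off-diagonally on $\mathcal U$, so for every $m\in\mathcal U$ we get $y_m=c_m H_\beta(m,m)$ with $H_\beta(m,m)>0$; since $m\neq i$, feasibility forces $y_m\le 0$ and hence $c_m\le 0$ for all $m\in\mathcal U$. Next, $y_i=\sum_{m\in\mathcal U}c_mH_\beta(m,i)$ is a sum of terms $c_mH_\beta(m,i)\ge 0$ (each $c_m\le 0$ and $H_\beta(m,i)\le 0$), so $y_i=1>0$ forces some neighbor $m\in\mathcal U$ of $i$ with $c_m<0$ and $H_\beta(m,i)=-\beta_{mi}<0$.

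The crux, and the step I expect to be the main obstacle, is converting this into a contradiction against the sign constraints $y_j\le 0$ for all $j\neq i$. Here I invoke the last property in \eqref{H_dd}: the internal node $m$ has at least two neighbors in $\mathcal U_c$, so it possesses a neighbor $k\neq i$. For that $k$, $y_k=\sum_{m'\in\mathcal U}c_{m'}H_\beta(m',k)\ge c_mH_\beta(m,k)=(-c_m)\beta_{mk}>0$, contradicting feasibility $y_k\le 0$. Hence no feasible $y$ attains a zero objective when $i\in\mathcal U_c$, which proves the converse. The two delicate points to verify carefully are that (i) positive-definiteness of $\Sigma_{p^{\mathcal U_c}}$ is exactly what upgrades $y^T\Sigma_\theta y=0$ to the algebraic identity $y^TJ^{\mathcal U_c}_\beta=0$, and (ii) Assumption~\ref{a:terminal_node} plays a double role — it diagonalizes $H^{\mathcal U}_\beta$ on $\mathcal U$ (pinning down the signs of the $c_m$) and simultaneously guarantees the extra neighbor $k\neq i$ in $\mathcal U_c$ whose sign feasibility forbids.
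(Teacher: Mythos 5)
Your proof is correct and follows essentially the same route as the paper's: reformulate the objective as the quadratic form $y^T\Sigma_\theta y$, characterize zero cost via the left null space of $J^{\mathcal{U}_c}_\beta$ (Eq.~\eqref{nullspace}), exhibit the normalized Laplacian row (equivalently, the flow-balance weights $\beta_{ij}/H_\beta(i,i)$) as a feasible zero-cost point for $i\in\mathcal{U}$, and derive the same sign contradiction at a neighbor $k\in\mathcal{U}_c\setminus\{i\}$ of the node with $c_m<0$ for $i\in\mathcal{U}_c$. The only cosmetic differences are that you obtain the forward direction directly from the pointwise identity $\theta_i=\sum_j w_{ij}\theta_j$ rather than through Eq.~\eqref{pf2}, and you note explicitly that the node $m$ with $c_m<0$ must be a neighbor of $i$.
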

Once the nodes in $\mathcal{U}$ have been identified, we use the following regression to identify the \emph{neighbors} of each node $i\in \mathcal{U}$.

\createspace\noindent \textbf{Constrained Nodal Regression:}
\begin{align}
 b^{i*} = \arg\smashoperator[lr]{\min_{x \in\mathbb{R}^{\mathcal{U}_c}}} \mathbb{E}[(\theta_i -{\theta^{\mathcal{U}_c}}^Tx)^2]\text{~s.t.~} x\geq \mathbf{0},~\mathbf{1}^Tx \leq 1
\label{P:regression_nonoise_2}
\end{align}
\begin{theorem}\label{thm:U_identify_2}
For $i \in \mathcal{U}$, consider Problem~\eqref{P:regression_nonoise_2}. 
Under Assumption~\ref{a:terminal_node}, $b^{i*}$ has cost $0$, and neighbors of $i$ are given by $\{j:j\in \mathcal{U}_c, b^{i*}_{j-|\mathcal{U}|} \neq 0\}$. 
\end{theorem}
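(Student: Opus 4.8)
The plan is to exhibit an explicit feasible regressor that attains zero cost, and then to argue that this minimizer is unique, so that its support pins down the neighbor set exactly. First, since $i \in \mathcal{U}$ has zero injection, the DC power flow relation~\eqref{DC_PF} gives $0 = p_i = \sum_{j:(ij)\in\mathcal{E}}\beta_{ij}(\theta_i - \theta_j)$, which rearranges into $\theta_i = \sum_{j:(ij)\in\mathcal{E}} \big(\beta_{ij}/\!\sum_{k:(ik)\in\mathcal{E}}\beta_{ik}\big)\,\theta_j$. This writes $\theta_i$ as a convex combination of the phase angles of its neighbors, with strictly positive weights summing to one. Crucially, by the non-adjacency part of Assumption~\ref{a:terminal_node} every neighbor $j$ of $i$ lies in $\mathcal{U}_c$ (this is also the content of the sign structure~\eqref{H_dd}), so the combination involves only entries of $\theta^{\mathcal{U}_c}$. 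Defining $x^*$ by $x^*_{j-|\mathcal{U}|}=\beta_{ij}/\!\sum_k\beta_{ik}$ for each neighbor $j$ and $0$ otherwise yields a feasible point ($x^*\geq \mathbf{0}$, $\mathbf{1}^Tx^*=1$) with $\theta_i = {\theta^{\mathcal{U}_c}}^Tx^*$ identically, hence zero cost. Since the objective is nonnegative, the minimum value of~\eqref{P:regression_nonoise_2} is $0$ and $x^*$ is a minimizer whose support is precisely the neighbor set of $i$.

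Next I establish uniqueness, which is what lets the support of $b^{i*}$ identify the neighbors. The objective $\mathbb{E}[(\theta_i - {\theta^{\mathcal{U}_c}}^Tx)^2]$ is quadratic in $x$ with Hessian $2\,\Sigma_{\theta^{\mathcal{U}_c}}$, so it is strictly convex, and therefore has a unique minimizer over the convex feasible set, provided $\Sigma_{\theta^{\mathcal{U}_c}}\succ 0$. Using $\theta = J_\beta p$ from~\eqref{DC_PF}--\eqref{covar_DC} and the partition $p=[\mathbf{0}^T,\;{p^{\mathcal{U}_c}}^T]^T$, I obtain $\theta^{\mathcal{U}_c} = J_{\mathcal{U}_c\mathcal{U}_c}\,p^{\mathcal{U}_c}$, where $J_{\mathcal{U}_c\mathcal{U}_c}$ is the principal sub-block of $J_\beta=H_\beta^{-1}$ on indices $\mathcal{U}_c$, so that $\Sigma_{\theta^{\mathcal{U}_c}} = J_{\mathcal{U}_c\mathcal{U}_c}\,\Sigma_{p^{\mathcal{U}_c}}\,J_{\mathcal{U}_c\mathcal{U}_c}$. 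Here $\Sigma_{p^{\mathcal{U}_c}}\succ 0$ by Assumption~\ref{a:uncorrelated_p}, so it remains only to show that $J_{\mathcal{U}_c\mathcal{U}_c}$ is invertible.

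The invertibility of $J_{\mathcal{U}_c\mathcal{U}_c}$ is the main technical step. I would use that the reduced Laplacian $H_\beta$, being a full-rank principal submatrix of the positive semidefinite weighted Laplacian, is symmetric positive definite, so every principal submatrix is positive definite; in particular $H_{\mathcal{U}\mathcal{U}}\succ 0$ is invertible. The block-inverse identity then gives $J_{\mathcal{U}_c\mathcal{U}_c} = (H_{\mathcal{U}_c\mathcal{U}_c} - H_{\mathcal{U}_c\mathcal{U}}H_{\mathcal{U}\mathcal{U}}^{-1}H_{\mathcal{U}\mathcal{U}_c})^{-1}$, the inverse of the Schur complement of $H_{\mathcal{U}\mathcal{U}}$, which is itself positive definite and hence invertible; thus $J_{\mathcal{U}_c\mathcal{U}_c}\succ 0$ and $\Sigma_{\theta^{\mathcal{U}_c}}\succ 0$. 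Combining the two halves, $b^{i*}=x^*$ is the unique minimizer of~\eqref{P:regression_nonoise_2}, so $\{j\in\mathcal{U}_c: b^{i*}_{j-|\mathcal{U}|}\neq 0\}$ equals the set of neighbors of $i$. I expect the feasibility and zero-cost step to be routine given the DC-PF identity and Assumption~\ref{a:terminal_node}; the delicate point is the positive-definiteness chain that guarantees uniqueness, since without it a zero-cost solution with spurious support could in principle exist.
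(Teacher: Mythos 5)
Your proof is correct, and its uniqueness argument runs on a genuinely different engine than the paper's. The existence half is the same in substance: your convex-combination point $x^*$, with weights $\beta_{ij}/\sum_{k}\beta_{ik}$ supported on the neighbors of $i$, is exactly the paper's zero-cost solution $y^T = H^{\mathcal{U}}_\beta(i,:)/H_\beta(i,i)$ written in regression coordinates. For uniqueness, the paper reuses the null-space characterization~\eqref{pf2} from Theorem~\ref{thm:U_identify_1}: any zero-cost vector must be of the form $c^T H^{\mathcal{U}}_\beta$, and because $H^{\mathcal{U}\mathcal{U}}_\beta$ is diagonal under non-adjacency, the constraints $y_i = 1$ and $y_j = 0$ for $j \in \mathcal{U}-\{i\}$ force $c_j = 0$ for all $j \neq i$ and $c_i = 1/H_\beta(i,i)$; this gives uniqueness among zero-cost feasible points, which suffices since one such point exists. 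You instead prove the objective is strictly convex by establishing $\Sigma_{\theta^{\mathcal{U}_c}} = J^{\mathcal{U}_c\mathcal{U}_c}_\beta \Sigma_{p^{\mathcal{U}_c}} J^{\mathcal{U}_c\mathcal{U}_c}_\beta \succ 0$, with invertibility of $J^{\mathcal{U}_c\mathcal{U}_c}_\beta$ coming from positive definiteness of the Schur complement $H^{\mathcal{U}_c\mathcal{U}_c}_\beta - H^{\mathcal{U}_c\mathcal{U}}_\beta {H^{\mathcal{U}\mathcal{U}}_\beta}^{-1} H^{\mathcal{U}\mathcal{U}_c}_\beta$ --- precisely the Kron-reduction identities that the paper defers to Eqs.~\eqref{kron_-1}--\eqref{kron_H} in Section~\ref{sec:non-zero}. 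Your route is heavier machinery but buys more: uniqueness of the minimizer over the whole feasible set (not merely among zero-cost points), plus the byproduct $\Sigma_{\theta^{\mathcal{U}_c}} \succ 0$, which the paper needs anyway in its noisy analysis (the $\sigma_{min}(\Sigma_{\theta^{\mathcal{U}_c}})$ appearing in Lemma~\ref{lem:boundU_c} and Theorems~\ref{thm:U_identify_1_noise}--\ref{thm:iden_U_c_noise}). The paper's route is lighter and, by working with the zero-cost set directly, also explains the failure mode when the $\mathcal{U}_c$-restriction is dropped: Problem~\eqref{P:regression_nonoise_1} can admit a whole family of zero-cost solutions (node $2$ in Fig.~\ref{fig:wrong_ab}), a situation where strict convexity is unavailable because the full $\Sigma_\theta$ is singular. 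One bookkeeping remark: you invoke Assumption~\ref{a:uncorrelated_p} for $\Sigma_{p^{\mathcal{U}_c}} \succ 0$, but this is no stronger than what the paper uses, since the equivalence~\eqref{pf2} also requires $\Sigma_{p^{\mathcal{U}_c}} \succ 0$; neither proof needs diagonality, so the paper's remark that Assumption~\ref{a:uncorrelated_p} is ``not used'' in Theorem~\ref{thm:U_identify_2} should be read with that caveat.
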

The regression problem~\eqref{P:regression_nonoise_2} differs from~\eqref{P:regression_nonoise_1} in that the regression vector $x$ is now constrained to be zero for nodes in $\mathcal{U}$, that have been identified using Theorem~\ref{thm:U_identify_1}. This constraint leverages the structural property in Assumption~\ref{a:terminal_node}. We now prove both these results.
\createspace
\begin{proof}[Proof of Theorem~\ref{thm:U_identify_1}]
Consider $y\in\mathbb{R}^N$ with $y_i =1$, $y_{-i} = -x$. The constraints and cost in Problem~\eqref{P:regression_nonoise_1} are re-formulated as:
\begin{flalign}
 &\text{(Constraints:)~~~~}y_i = 1,~y_{-i}\leq 0,~\mathbf{1}^Ty\geq 0 \label{pf1}\\
&\text{(Cost:)~~~~~~}\mathbb{E}(\theta y)^2 =y^T\Sigma_{\theta}y = y^TJ^{\mathcal{U}_c}_{\beta}\Sigma_{p^{\mathcal{U}_c}}{J_{\beta}^{\mathcal{U}_c}}^Ty\nonumber\\
&\text{Note that, (using Eq.~\eqref{nullspace}), }\nonumber\\
&~~~~~~~~~~~\mathbb{E}(\theta y)^2 = 0 \iff y^T = c^TH^{\mathcal{U}}_\beta \text{ for some $c.$}&&\label{pf2}
\end{flalign}
For $i\in \mathcal{U}$, let $y^T= \frac{H^{\mathcal{U}}_\beta(i,:)}{H_\beta(i,i)}$. From Eq.~\eqref{pf2}, $\mathbb{E}(\theta y)^2 =0$. Using Eq.~\eqref{Laplacian}, $y_i = 1$ , while $y_j \leq 0~\forall j\neq i$ and $\mathbf{1}^Ty =0$. Thus a feasible $y$ with cost $0$ exists for $i\in \mathcal{U}$.

For $i\in \mathcal{U}_c$, suppose $y^T = c^TH^{\mathcal{U}}_\beta$ with $y_i = 1, y_{-i}\leq0$. Note that for all $j\in \mathcal{U}$, $j\not =i$ and using~\eqref{H_dd}, $y_j = c_jH_{\beta}(j,j)$. $y_j\leq 0$ thus enforces $c_j\leq 0$ for all $j\in \mathcal{U}$. Similarly, $y_i =1$ implies $\exists j_1\in\mathcal{U}$ such that $c_{j_1}\neq 0$ and is strictly negative. Under Assumption~\ref{a:terminal_node}, $\exists k \in \mathcal{U}_C-\{i\}$ such that edge $(j_1k) \in \mathcal{E}$. Then, $y_k = -c_{j_1}\beta_{j_1k} + \sum_{j\in \mathcal{U}-\{j_1\}}c_jH_\beta(j,k)$. As $H_\beta(j,k)\leq 0$, we have $y_k>0$, which violates $y_{-i}\leq0$. Thus no $y = c^TH^{\mathcal{U}}_\beta$ is feasible with Eq.~\eqref{pf1} and the cost in Problem~\eqref{P:regression_nonoise_1} is non-zero for $i \in \mathcal{U}_c$. 
\end{proof}
\createspace
\begin{proof}[Proof of Theorem~\ref{thm:U_identify_2}]
For $i\in\mathcal{U}$, all its neighbors belong to $\mathcal{U}_c$, under Assumption~\ref{a:terminal_node}. Consider the following reformulation of the cost and constraints of Problem~\eqref{P:regression_nonoise_2}, 
\begin{align}
&\text{(Cost:)~~}\mathbb{E}(\theta y)^2 = y^TJ^{\mathcal{U}_c}_{\beta}\Sigma_p^{\mathcal{U}_c}{J_{\beta}^{\mathcal{U}_c}}^Ty \text{~with constraints,}\\
&y_i = 1,~y_j =0 \forall j \in \mathcal{U}-\{i\},~ y_{-i}\leq 0,~\mathbf{1}^Ty\geq 0. \label{pf2_2}
\end{align}
Suppose $y$ is a solution with cost zero. Using~\eqref{pf2}, there exists $c$ such that , $y^T=c^TH^{\mathcal{U}}_\beta$. As nodes in $\mathcal{U}$ are non-adjacent, $y_j = c_jH_\beta(j,j)$ for $j\in\mathcal{U}$. To ensure $y_i = 1$ and $y_j =0$, for all $j\in\mathcal{U}-\{i\}$, we have $c_i = 1/H_\beta(j,j)$ and $c_j= 0$ for all $ j \neq i$. Thus, $y^T= \frac{H^{\mathcal{U}}_\beta(i,:)}{H_\beta(i,i)}$ is the unique solution. From~\eqref{H_dd}, it follows that $y$ satisfies constraints in~\eqref{pf2_2}. Note that $j \neq i$, $y_j = H_\beta(i,j)/H_\beta(i,i)$ and thus is non-zero if and only if edge $(ij) \in\mathcal{E}$. Thus, it identifies the neighbors of $i$. 
\end{proof}

\textbf{Note:} (a) The constraints in Problem~\eqref{P:regression_nonoise_1} stem from power-flow conservation law. Using a similar analysis for $(v,\theta)$ in the LC-PF Eq.~\eqref{LC_PF}, it can be shown that nodes in $\mathcal{U}$ are identified by zero-cost solutions of the following complex-valued regression problem. 
\begin{align}
 \arg\smashoperator[lr]{\min_{x \in\mathbb{C}^{N-1}}} &\mathbb{E}[|(v_i-\hat{i}\theta_i -(v_{-i}-\hat{i}\theta_{-i})^Tx)|^2]\label{complex_reg}\\
 \text{~s.t.~}&\text{Real}(x) \geq \mathbf{0},~\mathbf{1}^T\text{Real}(x) \leq 1,\nonumber\\
 -\mathbf{1}&\leq \text{Imag}(x) \leq \mathbf{1},~\mathbf{1}^T\text{Imag}(x) =0.\nonumber
 \end{align} 
 (b) Following Eq.~\eqref{complex_reg}, the LC-PF version of Theorem \ref{thm:U_identify_2} can be constructed as a complex-valued regression problem over $(v,\theta)$, with regression coefficients restricted to $\mathcal{U}_c$. 

Note that Assumption~\ref{a:uncorrelated_p} is not used in either identification of nodes in $\mathcal{U}$ (Theorem~\ref{thm:U_identify_1}), or estimation of their neighbors (Theorem~\ref{thm:U_identify_2}). We next show that non-adjacency of nodes in $\mathcal{U}$ (Assumption~\ref{a:terminal_node}) is necessary for both these steps.
\begin{figure}[htb]
\centering\hfill
\includegraphics[width=0.45\textwidth]{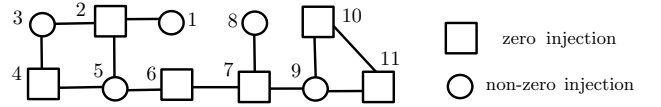}\hfill
\caption{Grid that violates Assumption~\ref{a:terminal_node} due to nodes $6,7$ and $10,11$.\label{fig:wrong_ab}}
\end{figure} 

\textbf{Requirement of Assumption~\ref{a:terminal_node}:} Consider the grid graph in Fig.~\ref{fig:wrong_ab}, with adjacent node pairs $6,~7$ and $10,~11$. 

Consider Problem~\eqref{P:regression_nonoise_1} for node $9 \in \mathcal{U}_c$. By combining rows $H^T_{\beta}(10,:), H^T_{\beta}(11,:)$ of adjacent zero-injection nodes, we can create a feasible zero-cost solution $y^T$ for $i=9$, which will incorrectly identify $9$ as a zero-injection node. For example, if $\min(\beta_{9,10},\beta_{9,11})> \beta_{9,11}$, then $y^T = -\frac{H^T_{\beta}(10,:)}{2\beta_{9,10}} -\frac{H^T_{\beta}(11,:)}{2\beta_{9,11}}$ can be verified as a valid zero-cost solution for $i = 9$. 

Now consider node $6 \in \mathcal{U}$. It can be verified that zero-cost solutions in Problem ~\eqref{P:regression_nonoise_1} for $i =6$ are given by $$y^T = \frac{H_\beta(7,7)H^T_\beta(6,:)+\epsilon \beta_{67}H^T_{\beta}(7,:)}{H_\beta(6,6)H_\beta(7,7)-\epsilon\beta^2_{67}} \text{~with~}0\leq\epsilon\leq 1.$$ Taking $\epsilon=1$ ensures $y^T_7 =0$, and produces a zero-cost solution for Problem~\eqref{P:regression_nonoise_2}, that incorrectly estimates $5,8,9$ as neighbors of $6$. Assumption~\ref{a:terminal_node}
is thus necessary for correctness of Theorems \ref{thm:U_identify_1} and \ref{thm:U_identify_2}. Next, we analyze the asymptotic correctness of Theorems \ref{thm:U_identify_1} and \ref{thm:U_identify_2} in the practical setting, where measurements are noisy.

\subsection{Noisy setting} \label{sec:noise_1}
Let $\tilde{\theta}$ be the phase measurement corrupted by independent noise $n$ of mean zero and covariance $\Sigma_n \succ 0$, i.e.,
\begin{align}
 \tilde{\theta}\hspace{-2pt}=\hspace{-2pt}\theta+n \Rightarrow\hspace{-2pt} \Sigma_{\tilde{\theta}}= \Sigma_{\theta} +\Sigma_n. \label{eq_noise}
\end{align}
First, we define the following parameters, {
\begin{align}
\beta_{min}\hspace{-2pt} =\hspace{-2pt}\smashoperator[lr]{\min_{(ij) \in \mathcal{E}}} \beta_{ij},\s_{id}\hspace{-2pt}= \hspace{-2pt}\smashoperator[lr]{\max\limits_{i\in\mathcal{V}}}\frac{H_\beta(i,i)}{\beta_{min}}, \snr\hspace{-2pt}=\hspace{-2pt}\frac{\sigma_{min}(\Sigma_{\theta^{\mathcal{U}_c}})}{\sigma_{max}(\Sigma_{n})}. \label{beta_min}
\end{align}}
Here $\sigma_{min}(X)$ and $\sigma_{max}(X)$ denote the minimum and maximum singular values of matrix $X$ respectively. $\Sigma_{\theta^{\mathcal{U}_c}}$ is the covariance of $\theta^{\mathcal{U}_c}$.

Let the minimum for Problem~\eqref{P:regression_nonoise_1} for node $i$ under noise be achieved at $a_n^{i*}$. Let $y_n \in\mathbb{R}^N$, with ${y_n}_i = 1, {y_n}_{-i} = -a_n^{i*}$. Similarly, denote $y \in\mathbb{R}^N$, with $y_i = 1,y_{-i} = -a^{i*}$, where $a^{i*}$ is the noiseless optima. For $i\in\mathcal{U}$, optimality of $y_n$ under noise, gives,
\begin{align}
&{y_n}^T(\Sigma_{\theta} + \Sigma_n)y_n\leq y^T(\Sigma_{\theta} + \Sigma_n)y\stackrel{(a)}{=}y^T\Sigma_ny \nonumber\\
&\hspace{50pt}\leq (1+ \|a^{i*}\|_2^2)\sigma_{max}(\Sigma_{n})\stackrel{(b)}{\leq}2\sigma_{max}(\Sigma_{n})\label{ineq_U} 
\end{align}
Here $(a)$ follows from Theorem \ref{thm:U_identify_1} for $i \in\mathcal{U}$. $(b)$ follows from $\|a^{i*}\|_2\leq \|a^{i*}\|_1 \leq 1$. In contrast, when $i \in \mathcal{U}_c$, the following result holds:
\begin{lemma}\label{lem:boundU_c}
For $i \in \mathcal{U}_c$, consider $y_n\in\mathbb{R}^N$ where ${y_n}_i=1$, and ${y_n}_{-i} = -a_n^{i*}$ is the optimal solution for Problem~\eqref{P:regression_nonoise_1} under noise. Then, ${y_n}^T(\Sigma_{\theta} + \Sigma_n)y_n> \frac{\sigma_{min}(\Sigma_{\theta^{\mathcal{U}_c}})}{1+\s_{id}^4+\s^2_{id}}$, where $\s_{id}$ is given in~\eqref{beta_min}. 
\end{lemma}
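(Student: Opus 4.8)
The plan is to first discard the noise covariance: since $\Sigma_n\succ0$ we have $y_n^T(\Sigma_\theta+\Sigma_n)y_n\ge y_n^T\Sigma_\theta y_n$, so it suffices to lower bound $y_n^T\Sigma_\theta y_n$ over the feasible set of the reformulation~\eqref{pf1}, i.e. $y_{n,i}=1$, $y_{n,-i}\le\mathbf 0$, $\mathbf 1^Ty_n\ge0$, for the fixed $i\in\mathcal U_c$. Dropping the constraint $\mathbf 1^Ty_n\ge0$ only enlarges the feasible set and hence can only weaken the bound, so I would work with the sign constraints alone.

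Next I would collapse the rank-deficient form onto the excited coordinates. The zero-injection equations give $H_\beta^{\mathcal U}\theta=\mathbf 0$ \emph{deterministically}, and under Assumption~\ref{a:terminal_node} the block $H_\beta^{\mathcal U\mathcal U}$ is diagonal with strictly positive entries by~\eqref{H_dd}. Solving for $\theta^{\mathcal U}$ gives $\theta=M\theta^{\mathcal U_c}$ with $M=\begin{bmatrix}-(H_\beta^{\mathcal U\mathcal U})^{-1}H_\beta^{\mathcal U\mathcal U_c}\\ \mathbb I\end{bmatrix}$, whence $\Sigma_\theta=M\Sigma_{\theta^{\mathcal U_c}}M^T$ and $y_n^T\Sigma_\theta y_n=u^T\Sigma_{\theta^{\mathcal U_c}}u\ge\sigma_{min}(\Sigma_{\theta^{\mathcal U_c}})\|u\|^2$, where $u:=M^Ty_n=y_n^{\mathcal U_c}+M_1^Ty_n^{\mathcal U}$ and $M_1$ is the top block of $M$. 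The problem thus reduces to showing $\|u\|^2\ge 1/(1+\s_{id}^2+\s_{id}^4)$ for every feasible $y_n$, which quantifies the noiseless positivity established for $i\in\mathcal U_c$ in Theorem~\ref{thm:U_identify_1}.

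For the core estimate I would exhibit a single test vector $c\in\mathbb R^{\mathcal U_c}$ and use $\|u\|\ge c^Tu/\|c\|$. Writing the nonnegative ``spreading'' vectors $m_k\in\mathbb R^{\mathcal U_c}$, $k\in\mathcal U$, with $(m_k)_j=\beta_{kj}/H_\beta(k,k)$ (so that $M_1$ has rows $m_k^T$ and $(M_1c)_k=c^Tm_k$), I set $c_i=1$ and $c_j=-\s_{id}$ on one $\mathcal U_c$-neighbor $j_k\neq i$ of each zero-injection neighbor $k$ of $i$ (such neighbors exist and are distinct from $i$ by~\eqref{H_dd}), and $c_j=0$ elsewhere. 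Two facts then drive the bound: (i) $\beta_{ki}\le H_\beta(k,k)\le\s_{id}\beta_{min}\le\s_{id}\beta_{kj_k}$ gives $c^Tm_k\le(m_k)_i-\s_{id}(m_k)_{j_k}\le0$ for every $k\in\mathcal U$; and (ii) since $c_j\le0$ off $i$ with $y_{n,j}\le0$, and $y_{n,k}\le0$ with $(M_1c)_k=c^Tm_k\le0$, we get $c^Tu=c^Ty_n^{\mathcal U_c}+(M_1c)^Ty_n^{\mathcal U}\ge c_i=1$. Hence $\|u\|\ge 1/\|c\|$. Finally $\|c\|^2=1+\s_{id}^2\,m'$, where $m'$ is the number of active leak coordinates; since each zero-injection neighbor of $i$ contributes at least $\beta_{min}$ to $H_\beta(i,i)\le\s_{id}\beta_{min}$, there are at most $\s_{id}$ of them, so $m'\le\s_{id}$ and $\|c\|^2\le1+\s_{id}^3\le1+\s_{id}^2+\s_{id}^4$, which yields the claim (the strict inequality coming from $\Sigma_n\succ0$ together with the slack in (i)).

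The main obstacle is securing a \emph{dimension-free} constant in the last step: when $i$ has several zero-injection neighbors, the cancellation of coordinate $i$ can be spread across many coordinates, so the count $m'$ must be controlled by the structural bound $m'\le\s_{id}$ rather than by $|\mathcal U_c|$. I would also carefully handle the reference-bus terms in the row sums of $H_\beta$, which shift the diagonal entries but preserve every sign inequality in~\eqref{H_dd} used above, and confirm that relaxing $\mathbf 1^Ty_n\ge0$ indeed costs nothing. I note that this route appears to give the slightly stronger $1/(1+\s_{id}^3)$, which in turn implies the stated $1/(1+\s_{id}^2+\s_{id}^4)$.
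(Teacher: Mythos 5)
Your proposal is correct, but it follows a genuinely different route from the paper's proof. The paper first passes to the noiseless optimum $y$ via optimality of the noiseless solution, then writes $y^T\Sigma_\theta y \geq \sigma_{min^+}(\Sigma_\theta)\|y^R\|_2^2$ where $y^R$ is the projection of $y$ onto the range of $\Sigma_\theta$, invokes Cauchy's Interlace Theorem to replace $\sigma_{min^+}(\Sigma_\theta)$ by $\sigma_{min}(\Sigma_{\theta^{\mathcal{U}_c}})$, and then lower-bounds $\|y^R\|_2^2$ by expanding it into three blocks ($A+B+C$, over coordinate $i$, $\mathcal{U}$, and $\mathcal{U}_c\setminus\{i\}$) and running a four-case minimization over the unknown projection coefficients $x_1=c^-_{j_1}H_\beta(i,i)$, $x_2=c^+_{j_2}H_\beta(j_2,j_2)$. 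You instead bound the cost uniformly over the whole feasible polyhedron (no reference to the noiseless optimum), use the deterministic conservation identity $H_\beta^{\mathcal{U}}\theta=\mathbf{0}$ to factor $\Sigma_\theta=M\Sigma_{\theta^{\mathcal{U}_c}}M^T$ — which makes $\sigma_{min}(\Sigma_{\theta^{\mathcal{U}_c}})$ appear directly, with no projection or interlacing — and then control $\|M^Ty_n\|_2$ by Cauchy--Schwarz against one explicitly built test vector $c$, with the degree bound $\deg(i)\leq\s_{id}$ controlling $\|c\|_2$. Both arguments rest on the same structural facts (diagonality of $H_\beta^{\mathcal{U}\mathcal{U}}$ under Assumption~\ref{a:terminal_node}, existence of a second $\mathcal{U}_c$-neighbor for each node of $\mathcal{U}$, and the definition of $\s_{id}$), but your dual-vector argument eliminates the case analysis, proves the stronger statement that the bound holds for \emph{every} feasible vector, and delivers the slightly sharper constant $1/(1+\s_{id}^3)\geq 1/(1+\s_{id}^2+\s_{id}^4)$; its one presentational obligation is to treat separately the nodes $k\in\mathcal{U}$ not adjacent to $i$, for which $(m_k)_i=0$ makes $c^Tm_k\leq 0$ trivial (the leak coordinate $j_k$ is only defined for $\mathcal{U}$-neighbors of $i$). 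The strict inequality needs only $\Sigma_n\succ 0$ and $y_{n,i}=1\neq 0$, as you note.
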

The proof is provided in Appendix \ref{sec:lem:boundU_c}. The following result combines Lemma \ref{lem:boundU_c} with~\eqref{ineq_U} to ensure correct identification of $\mathcal{U}$ under noise.

\begin{theorem}\label{thm:U_identify_1_noise}
If $\snr\geq2(1+\s^4_{id}+\s^2_{id})$, only nodes in $\mathcal{U}$ have cost lower than $\frac{\sigma_{min}(\Sigma_{\theta^{\mathcal{U}_c}})}{1+\s_{id}^4+\s^2_{id}}$ in Problem~\eqref{P:regression_nonoise_1}, where $\s_{id}$, $\snr$ are defined in~\eqref{beta_min}.
\end{theorem}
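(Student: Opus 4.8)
The plan is to combine the two bounds already in hand for the two classes of nodes and to show that the hypothesis $\snr \geq 2(1+\s_{id}^4+\s_{id}^2)$ is precisely the algebraic condition under which the threshold $\tau := \frac{\sigma_{min}(\Sigma_{\theta^{\mathcal{U}_c}})}{1+\s_{id}^4+\s_{id}^2}$ separates them. First I would recall from inequality~\eqref{ineq_U} that for every $i \in \mathcal{U}$ the optimal noisy cost satisfies ${y_n}^T(\Sigma_{\theta}+\Sigma_n)y_n \leq 2\sigma_{max}(\Sigma_n)$, and from Lemma~\ref{lem:boundU_c} that for every $i \in \mathcal{U}_c$ the optimal noisy cost is strictly greater than $\tau$. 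Thus the two families of nodes already lie on opposite sides of the two reference quantities $2\sigma_{max}(\Sigma_n)$ and $\tau$, and the only thing left to establish is the order between these two quantities.

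Next I would convert the SNR hypothesis into that ordering. Rearranging $\snr = \frac{\sigma_{min}(\Sigma_{\theta^{\mathcal{U}_c}})}{\sigma_{max}(\Sigma_n)} \geq 2(1+\s_{id}^4+\s_{id}^2)$ gives $\sigma_{min}(\Sigma_{\theta^{\mathcal{U}_c}}) \geq 2(1+\s_{id}^4+\s_{id}^2)\,\sigma_{max}(\Sigma_n)$, and dividing both sides by $(1+\s_{id}^4+\s_{id}^2)$ yields $\tau \geq 2\sigma_{max}(\Sigma_n)$. Chaining this with the two bounds, every $i \in \mathcal{U}$ has cost at most $2\sigma_{max}(\Sigma_n) \leq \tau$, while every $i \in \mathcal{U}_c$ has cost strictly above $\tau$. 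Consequently no node of $\mathcal{U}_c$ can attain a cost at or below $\tau$, so any node whose cost is lower than $\tau$ necessarily lies in $\mathcal{U}$, which is exactly the asserted characterization; the same chain shows the $\mathcal{U}$ nodes sit below the threshold, so thresholding at $\tau$ recovers $\mathcal{U}$ in full.

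Since both input bounds are established earlier (the upper bound for $\mathcal{U}$ in~\eqref{ineq_U} and the lower bound for $\mathcal{U}_c$ deferred to the appendix), I do not anticipate a genuine obstacle in this step; the only point demanding care is the direction and strictness of the inequalities. I would verify that the bound from Lemma~\ref{lem:boundU_c} is strict, so that $\mathcal{U}_c$ nodes remain excluded even in the boundary case of equality in the SNR condition, whereas the bound for $\mathcal{U}$ may be taken non-strict. I would also observe that the factor $1+\s_{id}^4+\s_{id}^2$ occurs identically in the threshold $\tau$ and in the SNR requirement, so it cancels cleanly and the stated constant $2(1+\s_{id}^4+\s_{id}^2)$ is precisely the value at which the upper bound $2\sigma_{max}(\Sigma_n)$ for $\mathcal{U}$ meets the lower bound $\tau$ for $\mathcal{U}_c$; no slack is lost in this final combining step.
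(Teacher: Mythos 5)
Your proof is correct and matches the paper's intended argument exactly: the paper states that Theorem~\ref{thm:U_identify_1_noise} follows by combining the bound~\eqref{ineq_U} for nodes in $\mathcal{U}$ with Lemma~\ref{lem:boundU_c} for nodes in $\mathcal{U}_c$, and your chaining of $2\sigma_{max}(\Sigma_n) \leq \frac{\sigma_{min}(\Sigma_{\theta^{\mathcal{U}_c}})}{1+\s_{id}^4+\s_{id}^2}$ under the SNR hypothesis is precisely that combination. Your attention to the strictness of the Lemma~\ref{lem:boundU_c} bound (which handles the boundary case of equality in the SNR condition) is a careful touch consistent with the one-directional claim of the theorem.
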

 
We now analyze Theorem \ref{thm:U_identify_2} for noisy estimation of the neighbors of zero-injection nodes. For node $i\in\mathcal{U}$, let the optimal for Problem~\eqref{P:regression_nonoise_2} be achieved at $b_n^{i*}$ under noise, and at $b^{i*}$ in the noiseless setting. Let $\theta^\mathcal{M} = \begin{bmatrix}\theta^i\\\theta^{\mathcal{U}_c}\end{bmatrix}$, $n^\mathcal{M} = \begin{bmatrix}n^i\\n^{\mathcal{U}_c}\end{bmatrix}$, where $\mathcal{M} = \{i\}\cup\mathcal{U}_c$. By optimality of $b_n^{i*}$ in the noisy setting, we have,{\footnotesize
\begin{align}
&\begin{bmatrix} 1\\ -b_n^{i*} \end{bmatrix}^T\hspace{-5pt}(\Sigma_{\theta^\mathcal{M}} + \Sigma_{n^\mathcal{M}})\hspace{-2pt}\begin{bmatrix} 1\\-b_n^{i*} \end{bmatrix} \leq 
 \begin{bmatrix} 1\\-b^{i*} \end{bmatrix}^T\hspace{-5pt}(\Sigma_{\theta^\mathcal{M}} + \Sigma_{n^\mathcal{M}})\hspace{-2pt}\begin{bmatrix} 1\\-b^{i*} \end{bmatrix}\nonumber \\
&\Rightarrow\begin{bmatrix} 0\\ b_n^{i*}-b^{i*} \end{bmatrix}^T\hspace{-5pt}(\Sigma_{\theta^\mathcal{M}} + \Sigma_{n^\mathcal{M}})\hspace{-2pt}\begin{bmatrix} 0\\ b_n^{i*}-b^{i*} \end{bmatrix} \nonumber\\
 &\hspace{50pt}\leq 2\begin{bmatrix} 0\\ b_n^{i*}-b^{i*} \end{bmatrix}^T\hspace{-4pt}(\Sigma_{\theta^\mathcal{M}} + \Sigma_{n^\mathcal{M}})\begin{bmatrix}1 \\ -b^{i*} \end{bmatrix} \nonumber \\
 &\hspace{50pt}\stackrel{(a)}{=}2\begin{bmatrix} 0\\ b_n^{i*}-b^{i*} \end{bmatrix}^T\hspace{-4pt}\Sigma_{n^\mathcal{M}}\begin{bmatrix}1 \\ -b^{i*}\end{bmatrix} \stackrel{(b)}{\leq} 4\sqrt{2}\sigma_{max}(\Sigma_{n}) \label{eq:partial1}
\end{align}}
Here, $(a)$ follows from optimal cost of $0$ for $b^{i*}$ in the noiseless setting (Theorem \ref{thm:U_identify_2}). (b) follows from $\|b_n^{i*}-b^{i*}\|_2\leq \|b_n^{i*}-b^{i*}\|_1 \leq \|b_n^{i*}\|_1 + \|b^{i*}\|_1 \leq 2$, and $\sigma_{max}(\Sigma_{n^\mathcal{M}})\leq \sigma_{max}(\Sigma_{n})$. Further, as $\Sigma_{n} \succ 0$, we have
\begin{align}
&\begin{bmatrix} 0\\ b_n^{i*}-b^{i*} \end{bmatrix}^T\hspace{-7pt}(\Sigma_{\theta^{\mathcal{M}}} +\Sigma_{n^{\mathcal{M}}})\hspace{-3pt}\begin{bmatrix} 0\\ b_n^{i*}-b^{i*} \end{bmatrix} >\nonumber\\ &\hspace{120pt}(b_n^{i*}-b^{i*})^T\Sigma_{\theta^{\mathcal{U}_c}}(b_n^{i*}-b^{i*}) \nonumber\\
&\Rightarrow~4\sqrt{2}\sigma_{max}(\Sigma_{n})> \sigma_{min}(\Sigma_{\theta^{\mathcal{U}_c}}) \|b_n^{i*}-b^{i*}\|_{\infty}^2 .\label{delta_inf}
\end{align}
The last inequality follows from~\eqref{eq:partial1} and the full-rank of $\Sigma_{\theta^{\mathcal{U}_c}}$. Using this, the next result ensures consistent estimation of neighbors of $\mathcal{U}$.
\begin{theorem} \label{thm:U_identify_2_noise}
For node $i \in \mathcal{U}$, let $b^{i*}_{n}$ be the optimal solution in Problem~\eqref{P:regression_nonoise_2} under noise. If $\snr \geq16\sqrt{2}\s^2_{id}$, neighbors of $i$ are given by $\{j :j\in \mathcal{U}_c, {b^{i*}_n}_{j-|\mathcal{U}|} \geq .5/\s_{id}\}$, where $\s_{id}$, $\snr$ are defined in Eq.~\eqref{beta_min}.
\end{theorem}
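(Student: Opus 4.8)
The plan is to exploit the explicitly known noiseless optimizer $b^{i*}$ from the proof of Theorem~\ref{thm:U_identify_2} and show that, under the stated SNR condition, the noisy optimizer $b^{i*}_n$ stays within half the minimum separation gap of $b^{i*}$ in every coordinate, so that the cutoff $0.5/\s_{id}$ recovers the support exactly.

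First I would recall the structure of the noiseless solution: for $i \in \mathcal{U}$, Theorem~\ref{thm:U_identify_2} gives $b^{i*}_{j-|\mathcal{U}|} = -H_\beta(i,j)/H_\beta(i,i)$ for $j \in \mathcal{U}_c$, which equals $\beta_{ij}/H_\beta(i,i)$ when $(ij) \in \mathcal{E}$ and $0$ otherwise. The crucial quantitative step is to bound the neighbor coordinates away from zero: since $\beta_{ij} \geq \beta_{min}$ and, by the definition of $\s_{id}$ in~\eqref{beta_min}, $H_\beta(i,i) \leq \beta_{min}\s_{id}$, every neighbor coordinate satisfies $b^{i*}_{j-|\mathcal{U}|} \geq 1/\s_{id}$. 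Thus the noiseless solution has all neighbor coordinates at least $1/\s_{id}$ and all non-neighbor coordinates exactly $0$, producing a clean gap that straddles the threshold $0.5/\s_{id}$.

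Second, I would control the perturbation $b^{i*}_n - b^{i*}$ in the $\ell_\infty$ norm using the inequality~\eqref{delta_inf} already established above, namely $4\sqrt{2}\,\sigma_{max}(\Sigma_n) > \sigma_{min}(\Sigma_{\theta^{\mathcal{U}_c}})\,\|b^{i*}_n - b^{i*}\|_\infty^2$. Dividing and substituting $\snr = \sigma_{min}(\Sigma_{\theta^{\mathcal{U}_c}})/\sigma_{max}(\Sigma_n)$ yields $\|b^{i*}_n - b^{i*}\|_\infty < (4\sqrt{2}/\snr)^{1/2}$. Plugging in the hypothesis $\snr \geq 16\sqrt{2}\,\s^2_{id}$, the ratio $4\sqrt{2}/\snr$ is at most $1/(4\s^2_{id})$, whose square root is $0.5/\s_{id}$; hence the coordinatewise deviation is strictly below $0.5/\s_{id}$. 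Combining with the previous step, each true-neighbor coordinate of $b^{i*}_n$ exceeds $1/\s_{id} - 0.5/\s_{id} = 0.5/\s_{id}$, while each non-neighbor coordinate remains below $0 + 0.5/\s_{id}$, so thresholding at $0.5/\s_{id}$ recovers precisely the neighbors of $i$.

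I do not expect a deep obstacle here; the work is essentially bookkeeping. The one place demanding care is the tie between $\beta_{min}$ and $\s_{id}$ that yields the separation gap $1/\s_{id}$, since this is exactly what pins down the cutoff value, together with the observation that the \emph{uniform} $\ell_\infty$ bound suffices simultaneously across all coordinates $j \in \mathcal{U}_c$ rather than requiring a coordinate-by-coordinate argument. I would also confirm that the full rank of $\Sigma_{\theta^{\mathcal{U}_c}}$ invoked in~\eqref{delta_inf} (and implicit in the finiteness of $\snr$) holds because all nodes of $\mathcal{U}_c$ are excited, so that no additional hypothesis such as Assumption~\ref{a:uncorrelated_p} is silently needed for this step.
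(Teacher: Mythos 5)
Your proposal is correct and follows essentially the same route as the paper's own proof: both use the explicit noiseless optimizer $b^{i*}_{j-|\mathcal{U}|} = \beta_{ij}/H_\beta(i,i)\geq 1/\s_{id}$ from Theorem~\ref{thm:U_identify_2}, the $\ell_\infty$ perturbation bound from Eq.~\eqref{delta_inf}, and the arithmetic $4\sqrt{2}/\snr \leq 1/(4\s^2_{id})$ to conclude that the threshold $0.5/\s_{id}$ separates neighbor from non-neighbor coordinates. Your added remarks (the role of $\beta_{min}$ in pinning the gap, the sufficiency of the uniform $\ell_\infty$ bound, and the full rank of $\Sigma_{\theta^{\mathcal{U}_c}}$) are consistent with the paper and fill in details the paper leaves implicit.
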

\begin{proof}
Consider $j\in\mathcal{U}_c$. By Theorem \ref{thm:U_identify_2}, noiseless solution $b^{i*}$ has $b^{i*}_{j-|\mathcal{U}|} = \beta_{ij}/H_\beta(i,i)\geq 1/\s_{id}$ if $(ij)\in\mathcal{E}$, and $0$ otherwise. If $\snr \geq16\sqrt{2}\s^2_{id}$, using Eq.~\eqref{delta_inf}, $\|b^{i*}_n-b^{i*}\|_{\infty}< .5/\s_{id}$. Thus, entries in $b^{i*}_n$ for neighbors of $i$ are greater than $.5/\s_{id}$, and less than $.5/\s_{id}$ otherwise.
\end{proof}

It is worth mentioning that, estimation of nodes in $\mathcal{U}$ and identification of their neighbors are done separately in Theorems \ref{thm:U_identify_1} and \ref{thm:U_identify_2}. To complete this section, we discuss why this is necessary and whether both these steps can be combined.

\subsection{Joint identification and neighborhood estimation for $\mathcal{U}$}
Theorem \ref{thm:U_identify_1} identifies under-excited nodes in $\mathcal{U}$ by zero-cost solution for Problem~\eqref{P:regression_nonoise_1}. However even when Assumption~\ref{a:terminal_node} holds, the solution to~\eqref{P:regression_nonoise_1} need not be unique. For example, consider node $2\in\mathcal{U}$ in the grid in Fig.~\ref{fig:wrong_ab}. Combining $H^T_\beta(2,:),~H^T_\beta(4,:)$, a family of zero-cost solutions $y =\frac{H^T_\beta(2,:)}{H_\beta(2,2)}-\epsilon\frac{H^T_\beta(4,:)}{H_\beta(4,4)}$ is constructed for $i = 2$, where $0\leq\epsilon< \frac{H_\beta(4,4)}{H_\beta(2,2)}\min\left(\frac{\beta_{23}}{\beta_{43}},\frac{\beta_{25}}{\beta_{45}}\right)$ ensures $y_{-2}\leq 0$. Hence, the regression coefficients of Theorem~\ref{thm:U_identify_1} may reflect incorrectly on the neighbor set. Observe that the correct neighbors of $2$ are identified only if $\epsilon = 0$. Indeed, $\epsilon = 0$ is indirectly enforced in Problem~\eqref{P:regression_nonoise_2} (unlike Problem~\eqref{P:regression_nonoise_1}) by restricting the regression coefficients to nodes in $\mathcal{U}_c$. 

If additional assumptions are allowed, joint identification of zero-injection nodes and their neighborhood estimation in Problem~\eqref{P:regression_nonoise_1} is possible. The next result demonstrates this under a sufficient topological assumption, that holds trivially for radial grids. 

\begin{theorem}\label{thm:U_identify_3}
In grid $\mathcal{G}$ with Assumption~\ref{a:terminal_node}, $\forall i \in\mathcal{U}$, assume that no $j \in \mathcal{U}-\{i\}$ exists such that all neighbors of $j$ are also neighbors of $i$. Then, true neighbors of $i\in\mathcal{U}$ are given by the non-zero entries in the solution of Problem~\eqref{P:regression_nonoise_1}.
\end{theorem}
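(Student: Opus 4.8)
The plan is to show that under the stated topological hypothesis the family of zero-cost feasible points of Problem~\eqref{P:regression_nonoise_1} collapses to the single vector $y^T = H^{\mathcal{U}}_\beta(i,:)/H_\beta(i,i)$ already produced in the proof of Theorem~\ref{thm:U_identify_1}, whose nonzero entries away from coordinate $i$ are exactly the neighbors of $i$. Since the variable $x$ of Problem~\eqref{P:regression_nonoise_1} satisfies $x=-y_{-i}$, uniqueness of this $y$ immediately yields the claim that the support of the regression solution equals the true neighbor set.

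First I would reuse the reformulation from the proof of Theorem~\ref{thm:U_identify_1}: writing $y\in\mathbb{R}^N$ with $y_i=1$ and $y_{-i}=-x$, a feasible point has zero cost if and only if $y^T=c^TH^{\mathcal{U}}_\beta$ for some $c$ by Eq.~\eqref{pf2}, and such a point exists for $i\in\mathcal{U}$, so the optimal value is zero. It therefore suffices to pin down the coefficient vector $c$. Invoking the non-adjacency of $\mathcal{U}$ (Assumption~\ref{a:terminal_node}) together with~\eqref{H_dd}, the restriction of $y$ to $\mathcal{U}$ reads $y_j=c_jH_\beta(j,j)$, so the constraints $y_i=1$ and $y_j\le 0$ force $c_i=1/H_\beta(i,i)>0$ and $c_j\le 0$ for every $j\in\mathcal{U}-\{i\}$.

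The crux is to upgrade these inequalities to $c_j=0$ for all $j\neq i$, which is precisely where the new hypothesis enters. Arguing by contradiction, I would suppose $c_j<0$ for some $j\in\mathcal{U}-\{i\}$. Since no zero-injection node has all its neighbors contained in the neighbor set of $i$, there is a node $k^\ast\in\mathcal{U}_c$ adjacent to $j$ but not to $i$, whence $H_\beta(i,k^\ast)=0$. Evaluating $y_{k^\ast}=\sum_{j'\in\mathcal{U}}c_{j'}H_\beta(j',k^\ast)$, the $i$-term drops out and every surviving term has $c_{j'}\le 0$ and $H_\beta(j',k^\ast)=-\beta_{j'k^\ast}<0$, so $y_{k^\ast}\ge 0$. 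The feasibility constraint $y_{k^\ast}\le 0$ then forces a sum of nonnegative terms to vanish, so each term, in particular $-c_j\beta_{jk^\ast}$, is zero; as $\beta_{jk^\ast}>0$ this gives $c_j=0$, contradicting $c_j<0$.

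Consequently $c_j=0$ for all $j\neq i$, the zero-cost feasible solution is unique and equals $H^{\mathcal{U}}_\beta(i,:)/H_\beta(i,i)$, and its entry at any $k\neq i$ is $H_\beta(i,k)/H_\beta(i,i)$, which is nonzero exactly when $(ik)\in\mathcal{E}$. The main obstacle is the contradiction step: the topological assumption must be applied precisely, to manufacture a neighbor $k^\ast$ of $j$ lying outside the neighborhood of $i$ whose constraint sign is rigid enough to annihilate $c_j$. Without this hypothesis, linear combinations of rows of $H^{\mathcal{U}}_\beta$, as in the counterexample for node $2$ in Fig.~\ref{fig:wrong_ab}, remain feasible and destroy uniqueness, which is exactly why the separate constrained regression of Problem~\eqref{P:regression_nonoise_2} was needed in the general case.
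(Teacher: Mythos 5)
Your proposal is correct and follows essentially the same route as the paper's proof: the same reformulation of zero-cost solutions as $y^T = c^T H^{\mathcal{U}}_\beta$ with $c_i = 1/H_\beta(i,i)$ and $c_j \leq 0$ forced by non-adjacency, the same use of the hypothesis to produce a node $k^\ast \in \mathcal{U}_c$ adjacent to $j$ but not to $i$, and the same sign argument showing $y_{k^\ast}$ is a sum of nonnegative terms whose feasibility constraint annihilates $c_j$. The only differences are cosmetic (you frame the key step as a contradiction where the paper argues directly, and your claim that every surviving $H_\beta(j',k^\ast)$ equals $-\beta_{j'k^\ast}<0$ should read $\leq 0$, which does not affect the conclusion).
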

The proof is given in Appendix \ref{sec:lem:U_identify_3}. Note that nodes $2,4$ in Fig.~\ref{fig:wrong_ab} violate the condition in Theorem~\ref{thm:U_identify_3}, hence estimation of $2$'s neighbors needs Problem~\eqref{P:regression_nonoise_2}. The consistency of Theorem~\ref{thm:U_identify_3} under noisy measurements can be shown using a similar analysis as Theorem~\ref{thm:U_identify_2_noise}. In the next section, we complete the learning process by estimating the remaining grid edges between non-zero injection nodes.

\section{Learning edges between non-zero injection nodes} \label{sec:non-zero}
In this section, we construct a method to reconstruct the remaining graph, i.e., identifying the edges between non-zero injection nodes.
\subsection{Structural Assumptions}
First, we partition $H_{\beta},J_{\beta}$ into blocks for zero and non-zero nodes, $H_\beta = \begin{bmatrix}H^{\mathcal{U}\mathcal{U}}_{\beta} &H^{\mathcal{U}\mathcal{U}_c}_{\beta}\\H^{\mathcal{U}_c\mathcal{U}}_{\beta} &H^{\mathcal{U}_c\mathcal{U}_c}_{\beta}\end{bmatrix}$, $J_\beta = \begin{bmatrix}J^{\mathcal{U}\mathcal{U}}_{\beta} &J^{\mathcal{U}\mathcal{U}_c}_{\beta}\\J^{\mathcal{U}_c\mathcal{U}}_{\beta} &J^{\mathcal{U}_c\mathcal{U}_c}_{\beta}\end{bmatrix}$. Then,
\begin{align}
&\Sigma_{\theta^{\mathcal{U}_c}} = {J^{\mathcal{U}_c\mathcal{U}_c}_{\beta}}\Sigma_{p^{\mathcal{U}_c}}{J^{\mathcal{U}_c\mathcal{U}_c}_{\beta}} \text{~(by Eq.~\eqref{covar_DC}, and $p^{\mathcal{U}_c} = \mathbf{0}$)}\label{kron_-1}\\
&\text{where~~} {J^{\mathcal{U}_c\mathcal{U}_c}_{\beta}}^{-1} = H^{\mathcal{U}_c\mathcal{U}_c}_{\beta}-H^{\mathcal{U}_c\mathcal{U}}_{\beta}{H_{\beta}^{\mathcal{U}\mathcal{U}}}^{-1}H^{\mathcal{U}\mathcal{U}_c}_{\beta}.\label{kron_H}
\end{align}
Note that $H_{\beta}^{\mathcal{U}\mathcal{U}}$ and its inverse are diagonal matrices as nodes in $\mathcal{U}$ are not adjacent. Here, ${J^{\mathcal{U}_c\mathcal{U}_c}_{\beta}}^{-1}$ represents the weighted Laplacian for a Kron-reduced $\hat{\mathcal{G}}$ obtained by removing nodes in $\mathcal{U}$ from $\mathcal{G}$ \cite{kron}. See Fig.~\ref{fig:grid}(b) for an illustrative example. To identify edges between nodes in $\mathcal{U}_c$, we impose the following topological restriction for zero-injection nodes.
\begin{assumption} \label{a:girth}
The minimum size of a loop in $\mathcal{G}$ is $4$. Further, any loop of size $4$ doesn't include nodes in $\mathcal{U}$, while any loop of size $5$ includes at most one node in $\mathcal{U}$.
\end{assumption}

Note that Assumption~\ref{a:girth} holds trivially for radial grids, that include a majority of distribution grids. The minimum loop size of $4$ is non-restrictive in real grids with large girth \cite{NY,germany}. In fact it is necessary for consistent estimation even in the fully-excited setting \cite{dekaloopy} (see Lemma \ref{lemma:full_inv}). Further, as nodes in $\mathcal{U}$ have degree at least two, neighbors of any $j\in\mathcal{U}$ cannot all be neighbors of another node $i\in\mathcal{U}$, as that would create a loop of size $4$. Assumption~\ref{a:girth}, thus, implies Theorem~\ref{thm:U_identify_3} and enables joint identification of nodes in $\mathcal{U}$ and estimation of their neighbors. 

\subsection{Noiseless setting}
Consider nodes $i,j \in \mathcal{U}_c$, If $\exists k \in \mathcal{U}$ with edges $(ik),(jk)$ (estimated by Theorem~\ref{thm:U_identify_2}), then $(ij) \notin \mathcal{E}$ as that will create a loop of size $3$. To complete the topology learning, we, thus, need to estimate only edges between $i,j\in \mathcal{U}_c$ that do not have a common neighbor in $\mathcal{U}$. The following result identifies such edges, in the noiseless setting. 
\begin{figure}[htb]
\centering
\includegraphics[width=0.48\textwidth]{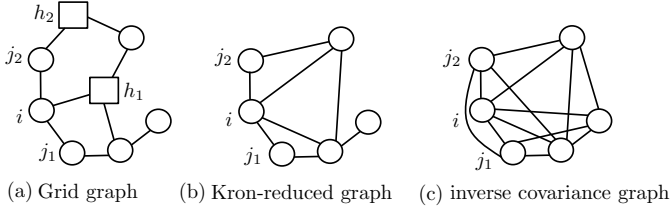}
\caption{(a) ${\mathcal{G}}$ with nodes in $\mathcal{U}$ marked square, $h_1, h_2$ violate Assumption~\ref{a:girth}, (b) $\hat{\mathcal{G}}$ after Kron-reducing $\mathcal{U}$, (c) Non-zero entries in inverse voltage covariance in $\mathcal{U}_c$.\label{fig:grid}}
\end{figure}
\begin{theorem}\label{thm:iden_U_c}
Consider nodes $i,j\in \mathcal{U}_c$ without any common neighbor in $\mathcal{U}$. Under Assumption $1,3$, $(ij) \in \mathcal{E}$ if and only if $\Sigma^{-1}_{\theta^{\mathcal{U}_c}}(i-|\mathcal{U}|,j-|\mathcal{U}|)<0$. 
\end{theorem}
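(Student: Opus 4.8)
The plan is to reduce Theorem~\ref{thm:iden_U_c} to the fully-excited result Lemma~\ref{lemma:full_inv}, now applied on the Kron-reduced graph $\hat{\mathcal{G}}$, and then translate the resulting sign condition back to $\mathcal{G}$ using Assumption~\ref{a:girth}. First I would set $\hat{H}_\beta := {J^{\mathcal{U}_c\mathcal{U}_c}_{\beta}}^{-1}$, the weighted Laplacian of $\hat{\mathcal{G}}$ given in~\eqref{kron_H}. Since $\Sigma_{p^{\mathcal{U}_c}}$ is diagonal and positive definite (Assumption~\ref{a:uncorrelated_p}), inverting~\eqref{kron_-1} yields $\Sigma^{-1}_{\theta^{\mathcal{U}_c}} = \hat{H}_\beta \, \Sigma_{p^{\mathcal{U}_c}}^{-1}\hat{H}_\beta$, exactly the form behind the fully-excited identity preceding Lemma~\ref{lemma:full_inv}. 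Writing $D := \Sigma_{p^{\mathcal{U}_c}}^{-1} = \mathrm{diag}(d_m)$ with $d_m>0$, the off-diagonal entry decomposes as $\Sigma^{-1}_{\theta^{\mathcal{U}_c}}(i,j) = \big(d_i\hat{H}_\beta(i,i)+d_j\hat{H}_\beta(j,j)\big)\hat{H}_\beta(i,j) + \sum_{m\neq i,j} d_m\,\hat{H}_\beta(i,m)\hat{H}_\beta(m,j)$ (indices taken within the $\mathcal{U}_c$ block). As $\hat{H}_\beta$ is a Laplacian, its off-diagonals are $\le 0$ and diagonals $>0$, so the first contribution is $\le 0$ and strictly negative exactly when $(ij)$ is an edge of $\hat{\mathcal{G}}$, while every summand of the cross term is $\ge 0$ and strictly positive exactly when $m$ is a common neighbor of $i,j$ in $\hat{\mathcal{G}}$.

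Next I would pin down the edge set of $\hat{\mathcal{G}}$. Since $H^{\mathcal{U}\mathcal{U}}_\beta$ is diagonal, the $(i,j)$ entry of the correction term in~\eqref{kron_H} equals $\sum_{k\in\mathcal{U}}\beta_{ik}\beta_{jk}/H_\beta(k,k)$, which is nonzero precisely when $i,j$ share a neighbor $k\in\mathcal{U}$. Hence for the pair in the theorem — having no common neighbor in $\mathcal{U}$ — we get $\hat{H}_\beta(i,j)=H^{\mathcal{U}_c\mathcal{U}_c}_\beta(i,j)$, so $(ij)$ is an edge of $\hat{\mathcal{G}}$ iff $(ij)\in\mathcal{E}$. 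This already settles the converse: if $(ij)\notin\mathcal{E}$ the diagonal contribution vanishes and only the nonnegative cross term remains, giving $\Sigma^{-1}_{\theta^{\mathcal{U}_c}}(i,j)\ge 0$, so in particular it is not negative.

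The main work — and the step I expect to be the crux — is the forward direction, where with $(ij)\in\mathcal{E}$ I must show the cross term vanishes, i.e.\ $i,j$ have no common neighbor $m\in\mathcal{U}_c$ in $\hat{\mathcal{G}}$. Suppose such an $m$ existed; then $\{i,j,m\}$ is a triangle in $\hat{\mathcal{G}}$. I would classify each of its three edges as either a genuine edge of $\mathcal{E}$ or a fill-in created by an intermediate $\mathcal{U}$-node, observing that each fill-in inflates the corresponding cycle in $\mathcal{G}$ by exactly one $\mathcal{U}$-node. As $(ij)\in\mathcal{E}$ is genuine, at most the two edges $(im),(jm)$ are fill-ins, so the triangle lifts to a loop in $\mathcal{G}$ that is either a $3$-loop $\{i,j,m\}$ (zero fill-ins), a $4$-loop through one $\mathcal{U}$-node (one fill-in), or a $5$-loop through two \emph{distinct} $\mathcal{U}$-nodes $k_1\neq k_2$ (two fill-ins; distinctness holds because a shared $k$ would be a common $\mathcal{U}$-neighbor of $i,j$, excluded by hypothesis). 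Every one of these is forbidden by Assumption~\ref{a:girth}, so no such $m$ exists, the cross term is zero, and $\Sigma^{-1}_{\theta^{\mathcal{U}_c}}(i,j)=\big(d_i\hat{H}_\beta(i,i)+d_j\hat{H}_\beta(j,j)\big)(-\beta_{ij})<0$.

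The only care required in this last step is verifying that the lifted cycles are simple — the intermediate $\mathcal{U}$-nodes must be distinct from $i,j,m$ and from one another — which follows from the non-adjacency of $\mathcal{U}$ (Assumption~\ref{a:terminal_node}) together with the no-common-$\mathcal{U}$-neighbor hypothesis; everything else is the algebra of reading signs off the Laplacian quadratic form, inherited from Lemma~\ref{lemma:full_inv}.
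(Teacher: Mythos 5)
Your proposal is correct and follows essentially the same route as the paper: you identify the edges of the Kron-reduced graph via the no-common-$\mathcal{U}$-neighbor hypothesis (this is exactly the paper's Lemma~\ref{lem:Jinv}), and then rule out triangles of $\hat{\mathcal{G}}$ containing $i,j$ by lifting them to $3$-, $4$-, or $5$-loops in $\mathcal{G}$, each forbidden by Assumption~\ref{a:girth} (the paper's cases (A) and (B), with your zero-fill-in case handled there by the minimum-loop-size-four clause). The only cosmetic difference is that you inline the sign expansion of $\Sigma^{-1}_{\theta^{\mathcal{U}_c}} = \hat{H}_\beta \Sigma^{-1}_{p^{\mathcal{U}_c}}\hat{H}_\beta$ instead of invoking Lemma~\ref{lemma:full_inv} on $\hat{\mathcal{G}}$ as a black box.
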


\noindent Before proving the theorem, we state the following lemma.
\begin{lemma}  \label{lem:Jinv}
Let $i,j\in \mathcal{U}_c$ without a common neighbor in $\mathcal{U}$. Then ${J^{\mathcal{U}_c\mathcal{U}_c}_{\beta}}^{-1}(i-|\mathcal{U}|,j-|\mathcal{U}|) = H_\beta(i,j)$.
\end{lemma}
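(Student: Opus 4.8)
The plan is to evaluate the off-diagonal entry of the Schur complement in Eq.~\eqref{kron_H} directly and show that the entire correction term vanishes under the no-common-neighbor hypothesis. Recall that ${J^{\mathcal{U}_c\mathcal{U}_c}_{\beta}}^{-1} = H^{\mathcal{U}_c\mathcal{U}_c}_{\beta} - H^{\mathcal{U}_c\mathcal{U}}_{\beta}{H_\beta^{\mathcal{U}\mathcal{U}}}^{-1}H^{\mathcal{U}\mathcal{U}_c}_{\beta}$ is the Schur complement of $H^{\mathcal{U}\mathcal{U}}_\beta$ in $H_\beta$. By Assumption~\ref{a:terminal_node}, nodes in $\mathcal{U}$ are non-adjacent, so $H^{\mathcal{U}\mathcal{U}}_\beta$ is diagonal with strictly positive diagonal entries $H_\beta(k,k)$, and its inverse is simply $\mathrm{diag}\big(1/H_\beta(k,k)\big)_{k\in\mathcal{U}}$. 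This diagonal structure is what reduces the matrix correction to a scalar sum over $\mathcal{U}$.

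First I would write out the $(i-|\mathcal{U}|,\,j-|\mathcal{U}|)$ entry explicitly. The leading term $H^{\mathcal{U}_c\mathcal{U}_c}_\beta(i-|\mathcal{U}|,\,j-|\mathcal{U}|)$ is exactly $H_\beta(i,j)$, which is the claimed value; the only task is to show the correction term is zero. Exploiting the diagonal inverse, that correction term equals $\sum_{k\in\mathcal{U}} H_\beta(i,k)\,H_\beta(k,j)/H_\beta(k,k)$. The crux is then the observation that every summand vanishes: by Eq.~\eqref{Laplacian}, $H_\beta(i,k)\neq 0$ only if $k$ is a neighbor of $i$, and likewise $H_\beta(k,j)\neq 0$ only if $k$ is a neighbor of $j$, so the product $H_\beta(i,k)H_\beta(k,j)$ is nonzero only when $k\in\mathcal{U}$ is a \emph{common} neighbor of $i$ and $j$. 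Since $i,j$ share no common neighbor in $\mathcal{U}$ by hypothesis, at least one factor is zero for every $k$, the sum collapses, and we obtain ${J^{\mathcal{U}_c\mathcal{U}_c}_{\beta}}^{-1}(i-|\mathcal{U}|,\,j-|\mathcal{U}|) = H_\beta(i,j)$.

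There is no genuine analytical difficulty here, and the argument is short. The only points requiring care are the bookkeeping of the index shift by $|\mathcal{U}|$ (so that entries of the $\mathcal{U}_c\times\mathcal{U}_c$ block match the original $H_\beta$ indices) and the recognition that the diagonal form of $H^{\mathcal{U}\mathcal{U}}_\beta$ is precisely what lets the Kron correction be written as a single sum indexed by $k\in\mathcal{U}$. Accordingly, the \emph{main obstacle}, if any, is purely expository: translating the graph-theoretic statement ``$i$ and $j$ have no common neighbor in $\mathcal{U}$'' into the algebraic fact that $H_\beta(i,k)$ and $H_\beta(k,j)$ cannot both be nonzero for any single $k\in\mathcal{U}$.
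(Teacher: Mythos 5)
Your proposal is correct and follows essentially the same route as the paper: both use the Schur-complement (Kron reduction) identity in Eq.~\eqref{kron_H}, the diagonality of $H^{\mathcal{U}\mathcal{U}}_\beta$ from Assumption~\ref{a:terminal_node}, and the no-common-neighbor hypothesis to kill the correction term. Your write-up merely makes explicit the step the paper leaves implicit, namely that the correction term is $\sum_{k\in\mathcal{U}} H_\beta(i,k)H_\beta(k,j)/H_\beta(k,k)$ and each summand vanishes because no $k\in\mathcal{U}$ is adjacent to both $i$ and $j$.
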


\begin{proof}
Consider $i,j\in \mathcal{U}_c$ without common neighbor in $\mathcal{U}$. The $(i-|\mathcal{U}|,j-|\mathcal{U}|)^\text{th}$ entry in ${J^{\mathcal{U}_c\mathcal{U}_c}_{\beta}}^{-1}$ corresponds to $(i,j)$ in Kron-reduced $\hat{\mathcal{G}}$. As $H_{\beta}^{\mathcal{U}\mathcal{U}}$ is diagonal under Assumption~\ref{a:terminal_node} and that $i$ and $j$ do not have a common neighbor in $\mathcal{U}$, it follows from Eq.~\eqref{kron_H} that ${J^{\mathcal{U}_c\mathcal{U}_c}_{\beta}}^{-1}(i-|\mathcal{U}|,j-|\mathcal{U}|) =H_\beta^{\mathcal{U}_c\mathcal{U}_c}(i-|\mathcal{U}|,j-|\mathcal{U}|)= H_\beta(i,j)$. 
\end{proof}

\createspace
\begin{proof}[Proof of Theorem~\ref{thm:iden_U_c}]
Using Lemma~\ref{lem:Jinv}, $(ij)$ is an edge in $\hat{\mathcal{G}}$, if and only if $(i,j)\in\mathcal{E}$. We now prove that $i,j$ are not part of a loop of size $3$ in $\hat{\mathcal{G}}$. The result then holds by applying Lemma~\ref{lemma:full_inv} in $\hat{\mathcal{G}}$.

If $(i,j)\notin\mathcal{E}$, then edge $(ij)$ doesn't exist in $\hat{\mathcal{G}}$ and $i,j$ are not part of a three node loop. Next let $(i,j) \in \mathcal{E}$. As minimum loop size in $\mathcal{G}$ is four, a three node loop with $i,j$ and some $k$ exists in $\hat{\mathcal{G}}$ if (A) $\{(ik),(jh),(kh)\}\subseteq\mathcal{E}$ or $\{(ih),(kh),(jk)\}\subseteq\mathcal{E}$, $\exists h \in \mathcal{U}$, or (B) $(ih_i),(kh_i),(jh_j),(kh_j)\subseteq\mathcal{E}$, $\exists h_i,h_j \in \mathcal{U}$ (see Fig.~\ref{fig:grid}(a)). (A) implies $h$ is in a four node loop, while (B) implies $h_i,h_j$ are in a five node loop. (A), (B) thus contradict Assumption~\ref{a:girth}. Hence $i,j$ are not part of a three node loop in $\hat{\mathcal{G}}$. Using Lemma \ref{lemma:full_inv} on $\hat{\mathcal{G}}$, $\Sigma^{-1}_{\theta^{\mathcal{U}_c}}(i-|\mathcal{U}|,j-|\mathcal{U}|)<0$ for edge $(ij)$, and $\geq 0$ otherwise.
\end{proof}
\textbf{Note:} (a) For the LC-PF Eq.~\eqref{LC_PF}, as per the remark following Lemma \ref{lemma:full_inv}, an edge between $(i+|\mathcal{U}|),(j+|\mathcal{U}|) \in \mathcal{U}_c$ without common neighbor in $\mathcal{U}$ is present \textit{iff} $\Sigma^{-1}_{(v^{\mathcal{U}_c},\theta^{\mathcal{U}_c})}(i,j)+\Sigma^{-1}_{(v^{\mathcal{U}_c},\theta^{\mathcal{U}_c})}(i+|\mathcal{U}_c|,j+|\mathcal{U}_c|) <0$.

(b) In loopy $\mathcal{G}$, violation of Assumption~\ref{a:girth} can create three node loops in the Kron-reduced $\hat{\mathcal{G}}$ (see Fig.~\ref{fig:grid}(b)). Lemma~\ref{lemma:full_inv} may fail to distinguish true edges in that setting, as shown through counter-examples in fully-excited grids with $3$-node loops in \cite{dekairep}. We now discuss Theorem \ref{thm:iden_U_c} in the presence of noisy measurements.

\subsection{Noisy setting} \label{sec:noise_2}
Under the noise model in Eq.~\eqref{eq_noise}, consider $\Tilde{\theta}^{\mathcal{U}_c} ={\theta}^{\mathcal{U}_c} + n^{\mathcal{U}_c}$ with noisy inverse covariance matrix  $\Sigma^{-1}_{\Tilde{\theta}^{\mathcal{U}_c}}$. Using Woodbury formula \cite{hager1989updating,dekaloopy}, we have 
\begin{align}
 \Delta\Sigma^{-1}_{\theta^{\mathcal{U}_c}} = \Sigma^{-1}_{\theta^{\mathcal{U}_c}}-\Sigma^{-1}_{\Tilde{\theta}^{\mathcal{U}_c}} = \Sigma^{-1}_{\theta^{\mathcal{U}_c}}(\Sigma^{-1}_{n^{\mathcal{U}_c}} + \Sigma^{-1}_{\theta^{\mathcal{U}_c}})^{-1}\Sigma^{-1}_{\theta^{\mathcal{U}_c}}, \label{deviateinverse}
\end{align}
 Using positive-definiteness of $\Delta\Sigma^{-1}_{\theta^{\mathcal{U}_c}}$ in Eq.~\eqref{deviateinverse}, we have
\begin{align}
&\max_{i,j} |\Delta\Sigma^{-1}_{\theta^{\mathcal{U}_c}}(i,j)|\leq \max_{i} \Delta\Sigma^{-1}_{\theta^{\mathcal{U}_c}}(i,i) \leq \sigma_{max}(\Delta\Sigma^{-1}_{\theta^{\mathcal{U}_c}})\nonumber\\
&\leq \frac{\sigma_{max}^2(\Sigma^{-1}_{\theta^{\mathcal{U}_c}})}{\sigma_{min}(\Sigma^{-1}_{n^{\mathcal{U}_c}} + \Sigma^{-1}_{\theta^{\mathcal{U}_c}})}<\sigma_{max}^2(\Sigma^{-1}_{\theta^{\mathcal{U}_c}})\sigma_{max}(\Sigma_{n^{\mathcal{U}_c}})&\nonumber\\
&\hspace{100pt}\leq \sigma_{max}(\Sigma_{n})/\sigma_{min}^2(\Sigma_{\theta^{\mathcal{U}_c}}).\label{maxdeviate}
\end{align}
We use this to modify Theorem~\ref{thm:iden_U_c} under noise.

\begin{theorem}\label{thm:iden_U_c_noise}
Consider $i,j \in \mathcal{U}_c$ with no common neighbor in $\mathcal{U}$. If $\snr \geq \frac{\max_i\Sigma_p(i,i)}{\beta^2_{\min}\sigma_{min}(\Sigma_{\theta^{\mathcal{U}_c}})}$, edge $(ij)$ exists if and only if $\Sigma^{-1}_{\tilde{\theta}^{\mathcal{U}_c}}(i-|\mathcal{U}|,j-|\mathcal{U}|)\leq-\beta^2_{\min}/\max_i\Sigma_p(i,i)$, where $\beta_{min},\snr$ are given in~\eqref{beta_min}.
\end{theorem}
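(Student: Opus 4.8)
The plan is to upgrade the qualitative sign test of Theorem~\ref{thm:iden_U_c} into a quantitative margin test by (i) establishing a strictly negative upper bound on the true entry $\Sigma^{-1}_{\theta^{\mathcal{U}_c}}(i-|\mathcal{U}|,j-|\mathcal{U}|)$ for genuine edges, versus the nonnegative lower bound that Theorem~\ref{thm:iden_U_c} already supplies for non-edges, and (ii) controlling the noise-induced perturbation $\Delta\Sigma^{-1}_{\theta^{\mathcal{U}_c}}$ through the bound already derived in Eq.~\eqref{maxdeviate}. Throughout I would work in the Kron-reduced graph $\hat{\mathcal{G}}$ with Laplacian $\hat{H}_\beta={J^{\mathcal{U}_c\mathcal{U}_c}_{\beta}}^{-1}$, using the identity $\Sigma^{-1}_{\theta^{\mathcal{U}_c}}=\hat{H}_\beta\Sigma^{-1}_{p^{\mathcal{U}_c}}\hat{H}_\beta$ obtained by inverting Eq.~\eqref{kron_-1}, and recalling that $\Sigma_{p^{\mathcal{U}_c}}$ (hence its inverse) is diagonal under Assumption~\ref{a:uncorrelated_p} and positive on $\mathcal{U}_c$.

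First I would quantify the noiseless entry (with $i,j$ denoting the corresponding reduced indices). Since $i,j$ have no common neighbour in $\mathcal{U}$, the proof of Theorem~\ref{thm:iden_U_c} shows that under Assumption~\ref{a:girth} they share no common neighbour in $\hat{\mathcal{G}}$ either. Expanding $\Sigma^{-1}_{\theta^{\mathcal{U}_c}}(i,j)=\sum_k\hat{H}_\beta(i,k)\Sigma^{-1}_{p^{\mathcal{U}_c}}(k,k)\hat{H}_\beta(k,j)$ and using diagonality, only $k=i$ and $k=j$ survive, so for $(ij)\in\mathcal{E}$ we get $\Sigma^{-1}_{\theta^{\mathcal{U}_c}}(i,j)=\hat{H}_\beta(i,j)\big(\hat{H}_\beta(i,i)/\Sigma_p(i,i)+\hat{H}_\beta(j,j)/\Sigma_p(j,j)\big)$, where $\hat{H}_\beta(i,j)=H_\beta(i,j)=-\beta_{ij}$ by Lemma~\ref{lem:Jinv}.

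The central step is to lower-bound the Kron-reduced diagonals. From Eq.~\eqref{kron_H} and diagonality of $H^{\mathcal{U}\mathcal{U}}_\beta$, $\hat{H}_\beta(i,i)=H_\beta(i,i)-\sum_{h\in\mathcal{U}:(ih)\in\mathcal{E}}\beta_{ih}^2/H_\beta(h,h)$. Because every $h\in\mathcal{U}$ has degree at least two (Assumption~\ref{a:terminal_node}), $H_\beta(h,h)>\beta_{ih}$, so each subtracted term is strictly below $\beta_{ih}$; cancelling the $\mathcal{U}$-contributions in $H_\beta(i,i)$ leaves $\hat{H}_\beta(i,i)>\sum_{k\in\mathcal{U}_c:(ik)\in\mathcal{E}}\beta_{ik}\geq\beta_{ij}\geq\beta_{\min}$, and symmetrically $\hat{H}_\beta(j,j)>\beta_{\min}$. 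Combining this with $\Sigma_p(i,i)\leq\max_i\Sigma_p(i,i)$ gives, for every edge, $\Sigma^{-1}_{\theta^{\mathcal{U}_c}}(i,j)\leq-2\beta^2_{\min}/\max_i\Sigma_p(i,i)$, whereas Theorem~\ref{thm:iden_U_c} already yields $\Sigma^{-1}_{\theta^{\mathcal{U}_c}}(i,j)\geq0$ for a non-edge.

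Finally I would close with the noise bound. Rearranging the hypothesis $\snr\geq\max_i\Sigma_p(i,i)/(\beta^2_{\min}\sigma_{min}(\Sigma_{\theta^{\mathcal{U}_c}}))$ gives $\sigma_{max}(\Sigma_n)/\sigma_{min}^2(\Sigma_{\theta^{\mathcal{U}_c}})\leq\beta^2_{\min}/\max_i\Sigma_p(i,i)$, so Eq.~\eqref{maxdeviate} forces $|\Delta\Sigma^{-1}_{\theta^{\mathcal{U}_c}}(i,j)|<\beta^2_{\min}/\max_i\Sigma_p(i,i)$. Writing $\Sigma^{-1}_{\tilde{\theta}^{\mathcal{U}_c}}(i,j)=\Sigma^{-1}_{\theta^{\mathcal{U}_c}}(i,j)-\Delta\Sigma^{-1}_{\theta^{\mathcal{U}_c}}(i,j)$, an edge then obeys $\Sigma^{-1}_{\tilde{\theta}^{\mathcal{U}_c}}(i,j)<(-2\beta^2_{\min}+\beta^2_{\min})/\max_i\Sigma_p(i,i)=-\beta^2_{\min}/\max_i\Sigma_p(i,i)$, while a non-edge obeys $\Sigma^{-1}_{\tilde{\theta}^{\mathcal{U}_c}}(i,j)>-\beta^2_{\min}/\max_i\Sigma_p(i,i)$, which is precisely the claimed threshold test. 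I expect the main obstacle to be the diagonal-dominance bound $\hat{H}_\beta(i,i)>\beta_{\min}$: it is the single place where Assumptions~\ref{a:terminal_node} and \ref{a:girth} (degree $\geq 2$ of $\mathcal{U}$-nodes, and the absence of short loops through $\mathcal{U}$ guaranteeing no common neighbour in $\hat{\mathcal{G}}$) must be combined, and it is what makes the separation margin scale cleanly as $\beta^2_{\min}/\max_i\Sigma_p(i,i)$ rather than degrade under Kron reduction.
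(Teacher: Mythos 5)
Your proof is correct and takes essentially the same route as the paper's: non-edges are dispatched by Theorem~\ref{thm:iden_U_c} together with the perturbation bound~\eqref{maxdeviate} under the stated \snr{} condition, and edges by the noiseless margin $\Sigma^{-1}_{\theta^{\mathcal{U}_c}}(i-|\mathcal{U}|,j-|\mathcal{U}|)\leq -2\beta^2_{\min}/\max_i\Sigma_p(i,i)$ obtained from Lemma~\ref{lem:Jinv} and lower bounds on the Kron-reduced diagonal entries. The one point of divergence is that you establish ${J^{\mathcal{U}_c\mathcal{U}_c}_{\beta}}^{-1}(i-|\mathcal{U}|,i-|\mathcal{U}|)>\beta_{ij}$ by direct computation from Eq.~\eqref{kron_H} using the degree-$\geq 2$ property of $\mathcal{U}$-nodes, whereas the paper invokes positive-definiteness of ${J^{\mathcal{U}_c\mathcal{U}_c}_{\beta}}^{-1}$; your derivation is the more self-contained justification of that step, since positive-definiteness alone only bounds the product of the two diagonals.
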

\begin{proof}
Consider $i,j \in \mathcal{U}_c$ without a common neighbor in $\mathcal{U}$. For $(ij)\notin \mathcal{E}$, by Theorem~\ref{thm:iden_U_c}, $\Sigma^{-1}_{\theta^{\mathcal{U}_c}}(i-|\mathcal{U}|,j-|\mathcal{U}|)\geq 0$. If $\snr$ satisfies the stated bound, using Eq.~\eqref{maxdeviate}, $\Sigma^{-1}_{\tilde{\theta}^{\mathcal{U}_c}}(i-|\mathcal{U}|,j-|\mathcal{U}|) > -\frac{1}{\snr\sigma_{min}(\Sigma_{\theta^{\mathcal{U}_c}})}\geq -\beta^2_{\min}/\max_i\Sigma_p(i,i)$. 

For $(ij)\in \mathcal{E}$, using Lemma~\ref{lem:Jinv} and the positive-definiteness of ${J^{\mathcal{U}_c\mathcal{U}_c}_{\beta}}^{-1}$, we get ${J^{\mathcal{U}_c\mathcal{U}_c}_{\beta}}^{-1}(i-|\mathcal{U}|,i-|\mathcal{U}|) \geq \beta_{ij}$ and 
${J^{\mathcal{U}_c\mathcal{U}_c}_{\beta}}^{-1}(j-|\mathcal{U}|,j-|\mathcal{U}|) \geq \beta_{ij}$.
Using Eq.~\eqref{kron_H}, we get 
\begin{align*}
  \Sigma^{-1}_{\theta^{\mathcal{U}_c}}(i-|\mathcal{U}|,j-|\mathcal{U}|) &\leq -\beta_{ij}^2 (\Sigma^{-1}_p(i,i) + \Sigma^{-1}_p(j,j)) \\
  &\leq -2 \beta^2_{\min}/\max_i\Sigma_p(i,i).
\end{align*}
Using Eq.~\eqref{maxdeviate} and the stated SNR bound, $\Sigma^{-1}_{\tilde{\theta}^{\mathcal{U}_c}}(i-|\mathcal{U}|,j-|\mathcal{U}|) \leq -\beta^2_{\min}/\max_i\Sigma_p(i,i)$. 
\end{proof}
\createspace\textbf{Note:} Using Eq.~\eqref{kron_-1}, $\sigma_{max}(\Sigma^{-1}_{\theta^{\mathcal{U}_c}}) \leq \frac{\sigma_{max}(\Sigma^{-1}_{p^{\mathcal{U}_c}})}{\sigma^2_{min}(J^{\mathcal{U}_c\mathcal{U}_c}_{\beta})} \leq \frac{\sigma_{max}(\Sigma^{-1}_{p^{\mathcal{U}_c}})}{\sigma^2_{min}(J_\beta)} = \frac{\sigma^2_{max}(H_\beta)}{\min_{i\in\mathcal{U}_c}\Sigma_{p}(i,i)}$. Using this, Theorem \ref{thm:iden_U_c_noise} holds when $\snr \geq \frac{\max_i\Sigma_p(i,i)}{\min_{i\in\mathcal{U}_c}\Sigma_p(i,i)}(\frac{\sigma_{max}(H_\beta)}{\beta_{\min}})^2$
\\

\textbf{Learning algorithm:} The overall steps of our voltage based learning algorithm are listed in Algorithm $1$. Theorem~\ref{thm:U_identify_1} is used in Steps~\ref{step1_start}-\ref{step1_end} to identify zero-injection nodes. Then their neighborhood nodes are identified in Steps~\ref{step2_start}-\ref{step2_end} using Theorem~\ref{thm:U_identify_2}. Finally, edges between non-zero injection nodes are determined in Steps~\ref{step3_start}-\ref{step3_end} through Theorem~\ref{thm:iden_U_c}. An example of the learning steps for a test grid is given in Fig.~\ref{fig:algo1_eg}. 
\begin{algorithm}\label{alg:1}
\caption{Topology Learning for under-excited grids}
\textbf{Input:} $\theta$ samples for nodes in $\mathcal{V}$, positive thresholds $\tau_1,\tau_2,\tau_3$\\
\textbf{Output:} Edge set $\mathcal{E}$ in grid $\mathcal{G}$
\begin{algorithmic}[1]
\State $\mathcal{U} \gets \{\}$. $\mathcal{N}_2 \gets \{\}$
\ForAll{node $i$ in $\mathcal{V}$} \label{step1_start}
\State Solve Problem~\eqref{P:regression_nonoise_1} to get cost $c_i$.
\If{$c_i\leq \tau_1$} \label{step_tau_1}
\State $\mathcal{U} \gets \mathcal{U} \cup \{i\}$
\EndIf
\EndFor \label{step1_end}
\ForAll{node $i$ in $\mathcal{U}$} \label{step2_start}
\State Solve Problem~\eqref{P:regression_nonoise_2} to get solution $b^{i*}$.
\State $\mathcal{E} \gets \mathcal{E} \cup \{(ij): b^{i*}_{j-|\mathcal{U}|}\geq \tau_2\}$. \label{step_tau_2}
\State $\mathcal{N}_2 \gets \mathcal{N}_2 \cup \{(jk): b^{i*}_{j-|\mathcal{U}|},b^{i*}_{k-|\mathcal{U}|}\geq \tau_2\}$.
\EndFor \label{step2_end}
\State Compute $\Sigma^{-1}_{\theta^{\mathcal{U}_c}}$ using $\theta$ samples for nodes in $\mathcal{U}_c= \mathcal{V}-\mathcal{U}$.
\ForAll{$i\neq j \in \mathcal{U}_c$} \label{step3_start}
\If{$\Sigma^{-1}_{\theta^{\mathcal{U}_c}}(i,j)<-\tau_3~~\&~~ (ij) \notin \mathcal{N}_2$} \label{step_tau_3}
\State $\mathcal{E} \gets \mathcal{E} \cup \{(ij)\}$
\EndIf
\EndFor \label{step3_end}
\end{algorithmic}
\end{algorithm}

\textbf{Computational Complexity:} Solving each of the constrained regression problems ~\eqref{P:regression_nonoise_1}, and ~\eqref{P:regression_nonoise_2} for all nodes, needs $O(N^4)$ operations. Inverse covariance estimation takes $O(N^3)$ operations. Edge estimation and creation of $\mathcal{N}_2$ in Steps~\ref{step_tau_2}-\ref{step2_end} for all nodes in $\mathcal{U}$ is bounded by $O(N^3)$ operations. Adding edges between nodes in $\mathcal{U}_c$ needs $O(N^3)$ operations. The overall complexity is thus $O(N^4)$. 
\begin{figure}[htb]
\centering
\includegraphics[width=0.46\textwidth]{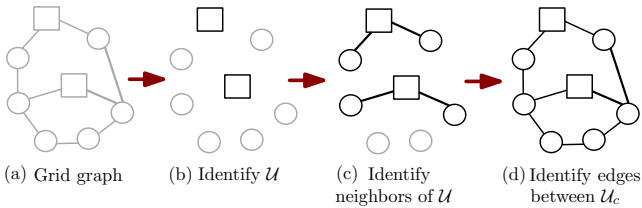}
\caption{Steps in Algorithm $1$ for a test grid.\label{fig:algo1_eg}}
\end{figure}

\textbf{Thresholds:} We include three thresholds: $\tau_1$ in Step~\ref{step_tau_1} (identify $\mathcal{U}$ nodes), $\tau_2$ in Step~\ref{step_tau_2} (identify neighbors of $\mathcal{U}$ nodes), and $\tau_3$ in Step~\ref{step_tau_3} (identify edges between $\mathcal{U}_c$ nodes). This is done to mitigate the effect of finite voltage samples and measurement noise. In the large sample limit, we combine results from Theorems~\ref{thm:U_identify_1_noise}, \ref{thm:U_identify_2_noise}, and \ref{thm:iden_U_c_noise} to determine the following thresholds, for consistent estimation in the presence of noise. 
\begin{theorem} \label{thm:overall_algorithm}
If $\snr \geq \max(\frac{\max_i\Sigma_p(i,i)}{\beta^2_{\min}\sigma_{min}(\Sigma_{\theta^{\mathcal{U}_c}})},16\sqrt{2}\s^2_{id}, 2(1+\s^4_{id}+\s^2_{id}))$, then taking $\tau_1 =\frac{\sigma_{min}(\Sigma_{\theta})}{1+\s_{id}^4+\s^2_{id}},\tau_2 = \frac{1}{2\s_{i}},\tau_3 = \frac{\beta^2_{\min}}{\max_i\Sigma_p(i,i)}$ in Algorithm $1$ gives asymptotically correct topology recovery, in the presence of noise, where $\snr, \s_{id}$ are defined in~\eqref{beta_min}.
\end{theorem}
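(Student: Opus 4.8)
The plan is to observe that the hypothesis $\snr \geq \max(\cdot,\cdot,\cdot)$ is nothing more than the simultaneous imposition of the three separate SNR thresholds appearing in Theorems~\ref{thm:U_identify_1_noise}, \ref{thm:U_identify_2_noise}, and \ref{thm:iden_U_c_noise}. Hence, under the stated bound, all three noisy-setting guarantees hold at once, and the proof reduces to verifying that the three stages of Algorithm~1 invoke these theorems with exactly the thresholds $\tau_1,\tau_2,\tau_3$ quoted here, and that the stages chain together without error propagation. Since the claim is asymptotic in $T$, I would first record that the empirical second-order statistics converge to their population counterparts, so that the regression cost $c_i$ in Step~\ref{step_tau_1} converges to the quantity analyzed through Eq.~\eqref{ineq_U} and Lemma~\ref{lem:boundU_c}, and the empirical $\Sigma^{-1}_{\theta^{\mathcal{U}_c}}$ converges to the noisy population inverse covariance $\Sigma^{-1}_{\tilde\theta^{\mathcal{U}_c}}$ of Theorem~\ref{thm:iden_U_c_noise}; every comparison against a fixed threshold then becomes exact in the limit.

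Next I would run the three stages in order. For Steps~\ref{step1_start}--\ref{step1_end}, Eq.~\eqref{ineq_U} gives cost at most $2\sigma_{max}(\Sigma_n)$ for $i\in\mathcal{U}$, while Lemma~\ref{lem:boundU_c} gives cost strictly above $\sigma_{min}(\Sigma_{\theta^{\mathcal{U}_c}})/(1+\s_{id}^4+\s_{id}^2)$ for $i\in\mathcal{U}_c$; the bound $\snr\geq 2(1+\s_{id}^4+\s_{id}^2)$ forces the former quantity below the latter, so thresholding at $\tau_1=\sigma_{min}(\Sigma_{\theta^{\mathcal{U}_c}})/(1+\s_{id}^4+\s_{id}^2)$ recovers $\mathcal{U}$ exactly, with neither false positives nor false negatives — this is Theorem~\ref{thm:U_identify_1_noise}. (I would flag that the $\Sigma_\theta$ written in the statement should read $\Sigma_{\theta^{\mathcal{U}_c}}$, since $\Sigma_\theta$ is rank-deficient and its smallest singular value is zero.) With the true set $\mathcal{U}$ in hand, Problem~\eqref{P:regression_nonoise_2} may legitimately restrict its regressors to $\mathcal{U}_c$, and Theorem~\ref{thm:U_identify_2_noise}, whose separation $\|b_n^{i*}-b^{i*}\|_\infty<0.5/\s_{id}$ follows from Eq.~\eqref{delta_inf} under $\snr\geq 16\sqrt2\,\s_{id}^2$, guarantees that thresholding the noisy coefficients at $\tau_2=1/(2\s_{id})$ returns exactly the neighbor set of each $i\in\mathcal{U}$ in Steps~\ref{step2_start}--\ref{step2_end}. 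The same exactness furnishes a correct exclusion set $\mathcal{N}_2$ of all $\mathcal{U}_c$-pairs sharing a common neighbor in $\mathcal{U}$.

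For Steps~\ref{step3_start}--\ref{step3_end}, I would split the $\mathcal{U}_c$-pairs according to whether they lie in $\mathcal{N}_2$. Any pair in $\mathcal{N}_2$ shares a common neighbor in $\mathcal{U}$ and hence cannot be an edge, since it would close a triangle and contradict the girth in Assumption~\ref{a:girth}; the algorithm correctly omits these through the guard $(ij)\notin\mathcal{N}_2$. For every remaining pair, which has no common neighbor in $\mathcal{U}$, Theorem~\ref{thm:iden_U_c_noise} applies verbatim: under $\snr\geq \max_i\Sigma_p(i,i)/(\beta_{\min}^2\sigma_{min}(\Sigma_{\theta^{\mathcal{U}_c}}))$, the test $\Sigma^{-1}_{\tilde\theta^{\mathcal{U}_c}}(i-|\mathcal{U}|,j-|\mathcal{U}|)\leq -\tau_3$ with $\tau_3=\beta_{\min}^2/\max_i\Sigma_p(i,i)$ is exact. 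To close the argument I would verify completeness: by Assumption~\ref{a:terminal_node} no edge joins two $\mathcal{U}$-nodes, so every edge is either of type $\mathcal{U}$--$\mathcal{U}_c$ (captured in Step~2) or of type $\mathcal{U}_c$--$\mathcal{U}_c$; and by the same girth argument no genuine $\mathcal{U}_c$--$\mathcal{U}_c$ edge has a common $\mathcal{U}$-neighbor, so every such edge survives the $\mathcal{N}_2$ guard and is caught in Step~3. Thus the output $\mathcal{E}$ equals the true edge set.

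The part needing the most care is not any single inequality — those are delivered by the three cited theorems — but the \emph{chaining}: I must argue that each stage is \emph{exactly} correct, with no spurious or missing elements, so that its output forms a valid, error-free input to the next, since Problem~\eqref{P:regression_nonoise_2} presupposes the true $\mathcal{U}$ and the $\mathcal{N}_2$ guard in Step~3 presupposes the true neighbor sets. The fact that the single SNR bound is the maximum of the three is precisely what makes all three exactness guarantees hold simultaneously and prevents error propagation; and the soundness and completeness of the edge partition rest squarely on Assumptions~\ref{a:terminal_node} and~\ref{a:girth}, which is where I would concentrate the verification.
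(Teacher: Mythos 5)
Your proof is correct and takes essentially the same route as the paper: the paper establishes this theorem precisely by combining Theorems~\ref{thm:U_identify_1_noise}, \ref{thm:U_identify_2_noise}, and~\ref{thm:iden_U_c_noise}, aggregating their individual SNR requirements into the single maximum and reading off the thresholds $\tau_1,\tau_2,\tau_3$ from their separation guarantees, with the chaining of the three stages (exact recovery of $\mathcal{U}$ feeding Problem~\eqref{P:regression_nonoise_2}, and exact neighbor sets feeding the $\mathcal{N}_2$ guard) being the only additional content --- which you spell out correctly. Your flags of the two typos in the statement are also well taken: $\sigma_{min}(\Sigma_{\theta})$ in $\tau_1$ should read $\sigma_{min}(\Sigma_{\theta^{\mathcal{U}_c}})$ (the former is zero in the under-excited regime), and $\s_{i}$ in $\tau_2$ should read $\s_{id}$, consistent with Theorems~\ref{thm:U_identify_1_noise} and~\ref{thm:U_identify_2_noise}.
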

In the next section, we present simulation results on the performance of our algorithm, in particular on noisy voltage data generated using real-world injection data through a non-linear power flow model.
\section{Numerical Simulations}
\label{sec:simulations}
We test Algorithm $1$ on two power grids, constructed from the IEEE $33$ bus system \cite{33bus}: (a) radial (Fig.~\ref{fig:radial}) with $9$ under-excited nodes, and (b) meshed/loopy (Fig.~\ref{fig:loopy}) with $8$ under-excited nodes, that respects Assumptions \ref{a:terminal_node} and \ref{a:girth}. For both grids, we test our algorithm using voltages generated using linear and non-linear models of (i) DC power flow and, (ii) AC power flow. While the linear models are given by Eqs.~\eqref{DC_PF},~\eqref{LC_PF}, the non-linear PF models are evaluated using Matpower \cite{zimmerman2011matpower}. For generating voltages in all cases, we consider nodal injection fluctuations that are uncorrelated across nodes. The fluctuations of nodal injections around their base loads are sampled using zero-mean Gaussian random variables of standard deviation $10^{-1}$. Eventually, we also test on voltage samples generated using real injection data from \cite{disc} that may be correlated between nodes. To demonstrate the performance under noise, we corrupt our generated voltage samples in all cases with zero-mean Gaussian noise of differing variance measured as a fraction of the variance of voltage measurements. It is worth noting that Algorithm $1$ only takes voltage measurements as input. No additional information regarding line parameters, values of injection statistics, or set of feasible lines are considered. In such a setting, the number of possible edges in our grid of $32$ non-reference nodes is $496$, with $32^{30}$ possible connected tree realizations (using Cayley's formula) and even more non-radial realizations. It is however noted that the performance of the algorithm can be improved by the inclusion of constraints if information regarding existence/non-existence of certain lines is known. 

We solve Problems~\eqref{P:regression_nonoise_1} and~\eqref{P:regression_nonoise_2} in Algorithm $1$ using CVX \cite{grant2014cvx}. We calibrate the algorithm's accuracy by relative estimation errors, computed as the sum of false edges and missed edges (false positives and true negatives) relative to the number of true edges in $\mathcal{E}$. The thresholds $\tau_1$, $\tau_2$ and $\tau_3$ in Algorithm $1$ are tuned to give minimal estimation error for a large sample size of $10^4$. These thresholds are then fixed, and the estimation errors at different sample sizes are computed by averaging over $15$ independent runs.

Fig.~\ref{fig:error_radialDC} shows the performance for the radial grid in Fig.~\ref{fig:radial} for linear and non-linear DC-PF voltages generated using simulated injection data. With noiseless data samples, the number of samples required for exact topology learning is approximately $300$ in either case. Perfect reconstruction under corrupted voltages with relative noise variance $1\%$ is observed at approximately $10^4$ samples. However, at $600$ samples, the average error is already below $5\%$ for both linear and non-linear models. While the performance improves with high sample sizes, note that the thresholds used for topology learning are not updated at each sample size, but fixed to the ones decided for asymptotically perfect recovery. In a realistic setting, the learning algorithm can consider the previous topology as a prior and select thresholds based on the pre-determined number of samples to improve the performance. Further the performance with linear and non-linear voltage samples are almost identical at sample sizes above $600$. This follows from the acceptable accuracy of DC-PF in modeling the non-linear counterpart. Fig.~\ref{fig:error_loopyDC} shows the performance of the algorithm for DC-PF for the loopy grid in Fig.~\ref{fig:loopy}. The performance, as expected, improves as the number of samples is increased. Beyond $300$ samples, the performance is comparable for linear and non-linear models under both noise and noiseless regimes.
\begin{figure}[tb]
\centering
\subfigure[]{\includegraphics[width=0.33\columnwidth]{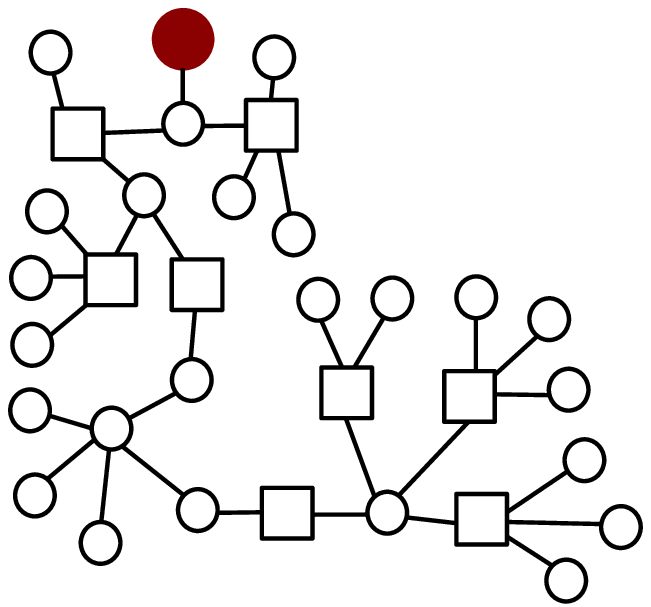}\label{fig:radial}}\hspace{30pt}
\subfigure[]{\includegraphics[width=0.33\columnwidth]{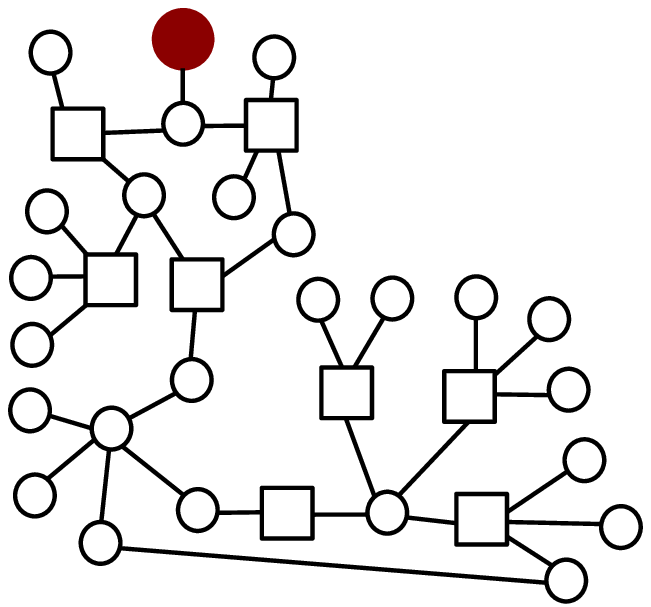}\label{fig:loopy}}\hfill
\caption{$33$ bus test networks \cite{33bus} for simulations, (a) radial (b) loopy. Non-zero injection nodes are marked square.}
\end{figure}
\begin{figure}[htb]
	\centering
	\subfigure[]{\includegraphics[width=0.76\columnwidth,height=0.56\columnwidth]{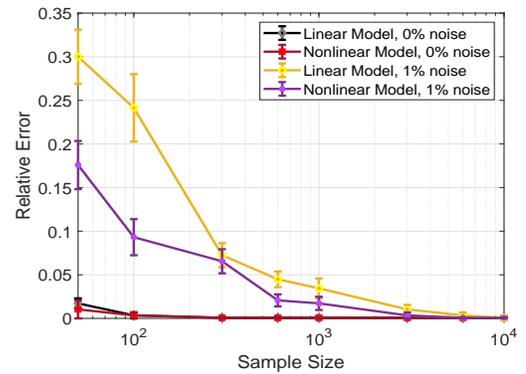} \label{fig:error_radialDC}}\hfill
	\subfigure[]{\includegraphics[width=0.76\columnwidth,height=0.56\columnwidth]{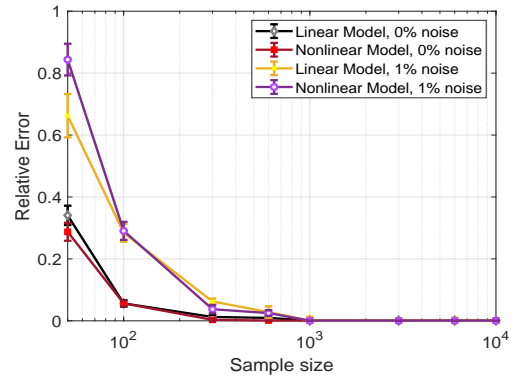}\label{fig:error_loopyDC}}
	\caption{Performance with linear and non-linear DC-PF voltages  for (a) radial grid in Fig.~\ref{fig:radial}, (b) loopy grid in Fig.~\ref{fig:loopy}.}
\end{figure}

Next, we demonstrate the performance of Algorithm $1$ for AC-PF voltage samples $(v,\theta)$, generated using the linear  
LC-PF Eq.~\eqref{LC_PF}, and the non-linear Matpower solver. The performance for the radial and loopy networks are given in Figs.~\ref{fig:error_radialAC} and~\ref{fig:error_loopyAC} respectively. In the noiseless setting, average errors go down to zero for the radial case at $600$ samples, and for the loopy case at $1000$ samples under both linear and non-linear models. With noise, the average errors are below $5\%$ for the radial case at $1000$ samples, and below $5\%$ for the loopy case at $1900$ samples. Perfect recovery in the noisy setting requires approximately $6000$ samples in the radial case and $3000$ samples in the loopy grid. 
\begin{figure}[htb]
	\centering
	\subfigure[]{\includegraphics[width=0.76\columnwidth,height=0.56\columnwidth]{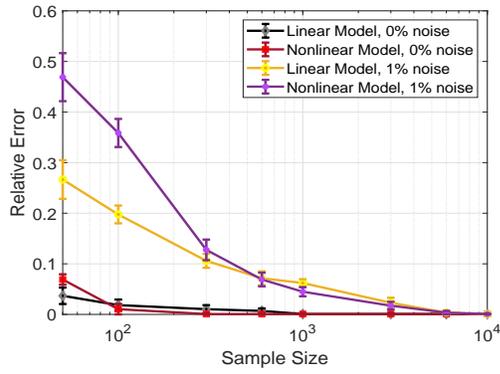} \label{fig:error_radialAC}}\hfill
	\subfigure[]{\includegraphics[width=0.76\columnwidth,height=0.56\columnwidth]{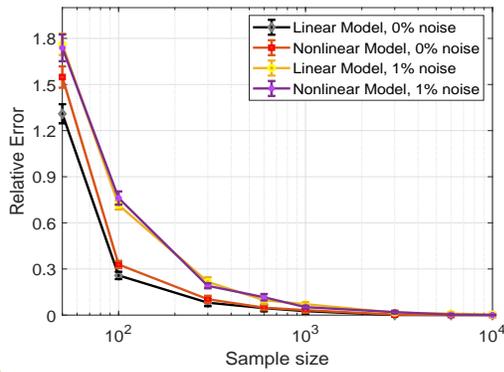}\label{fig:error_loopyAC}}
	\caption{Performance with linear and non-linear AC-PF voltages  for (a) radial grid in Fig.~\ref{fig:radial}, (b) loopy grid in Fig.~\ref{fig:loopy}.}
\end{figure}
Finally, we present results using real power/active load data, sampled at $15$ minute intervals from real house-holds \cite{disc}. Fig.~\ref{fig:error_real_loopyDC} demonstrates the performance of our learning algorithm for the loopy grid in Fig.~\ref{fig:loopy}, with linear and non-linear DC-PF samples. Note that the errors, in the noiseless setting, are below $2.67\%$ at $600$ samples for both the linear and non-linear case, despite no prior information about the topology. On the other hand, in the noisy setting, the average errors are below $2.67\%$ at $3000$ samples. All the experiments have asymptotically correct recovery ($0$ errors), which validates the theoretical consistency of our learning algorithm proven in the previous section. \begin{figure}[htb]
	\centering
	\begin{tabular}{c}
	\includegraphics[width=0.76\columnwidth,height=0.56\columnwidth]{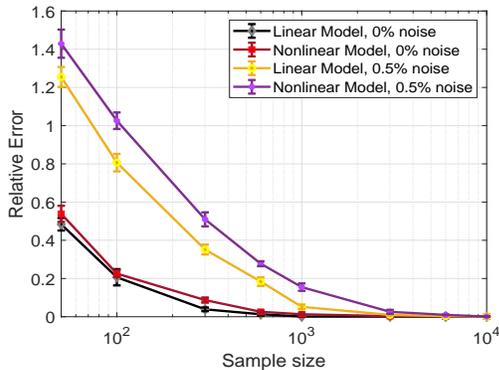}
	\end{tabular}
	\caption{Performance on loopy grid in Fig.~\ref{fig:loopy}, with linear and non-linear DC-PF voltages generated from real injection data.}
  \label{fig:error_real_loopyDC}\vspace{-10pt}
\end{figure}
\section{Conclusion}
\label{sec:conclusion}
This paper discusses topology learning using nodal voltages in general power grids under a realistic regime: one where a subset of the internal nodes in the grid are unexcited and have zero injection. This regime makes the voltage covariance matrix degenerate, and standard inverse covariance or mutual information based learning methods do not apply. We provide a novel learning algorithm to overcome the presence of unknown zero-injection nodes, by first identifying them and their neighboring nodes, and then identifying the remaining edges in a Kron-reduced network without the zero-injection nodes. Using techniques from noisy regression and inverse covariance estimation, we prove the asymptotic correctness of our learning algorithm, both in the noiseless and noisy settings, when unexcited nodes are internal, non-adjacent and not present in loops of size $4$. Notably, aside from voltage measurements, our algorithm does not use any prior information of structure, line parameters, values of nodal injections or their statistics. The theoretical contributions are validated by simulation results on IEEE test networks, with both linear and non-linear DC and AC voltage samples, corrupted with noise. Further, voltages generated using real injection data are used to caliberate the performance of our algorithm. 

While our work focuses on power grids, it naturally extends to learning under-excited structured equation models that follow mass and flow conservation laws. Additionally, this article focuses on voltage samples collected from a static and balanced power flow model. Extensions to three-phase networks are possibly using voltage statistics proposed for the fully-excited setting \cite{dekathreephase}. Similarly, extending our work to learning power system and general dynamical systems \cite{sauravacm,talukdar2020physics} in the under-excited regime, is another direction of future work.
\bibliography{voltage}
\appendix
\subsection{Proof of Lemma \ref{lem:boundU_c}}\label{sec:lem:boundU_c}
We decompose any $a \in\mathbb{R}$ as $a = a^+-a^-$, where 
$a^+ = \max(0,a)\geq 0$, and $a^- =-\min(0,a)\geq 0$. We list two inequalities that we use in the proof. 
\vspace{-2pt}\begin{align}
\forall a,b \in\mathbb{R}, &\text{~if~} b\geq 0, \text{~then~} (a+b)^+\geq a^+. \label{ineq_1}\\
&\text{~if~} a>b, \text{~then~} a^+\geq b^+. \label{ineq_2}\vspace{-2pt}
\end{align}
Consider $y$ with $y_i =1$, $y_{-i} = -a^{i*}$, where $a^{i*}>0$ is the optimal noiseless solution in Problem~\eqref{P:regression_nonoise_1} for $i\in\mathcal{U}_c$. Let $\sigma_{min^+}(\Sigma_{\theta})$ denote the smallest non-zero singular value of $\Sigma_{\theta}$. We have, 
\begin{align}
{y_n}^T(\Sigma_{\theta} + \Sigma_n)y_n \geq y_n^T\Sigma_{\theta}y_n\stackrel{a}{\geq}&y^T\Sigma_{\theta}y\geq \sigma_{min^+}(\Sigma_{\theta})\|y^R\|^2_2 \nonumber\\
&\stackrel{b}{\geq} \sigma_{min}(\Sigma_{\theta^{\mathcal{U}_c}})\|y^R\|^2_2 \label{ineq_U_c} \end{align}
where $(a)$ follows from optimality of $y$ in the noiseless setting. $y^R$ denotes the projection of $y$ in the range of $\Sigma_{\theta}$, and $(b)$ follows from Cauchy's Interlace Theorem as $\Sigma_{\theta^{\mathcal{U}_c}}$ is a principal sub-matrix of $\Sigma_{\theta}$. By Eq.~\eqref{pf2}, ${y^R}^T = y^T -c^TH^\mathcal{U}_\beta$ for some $c \in \mathbb{R}^{|\mathcal{U}|}$. Then 
\begin{equation}
\begin{split}
&\exists j_1,j_2 \in\mathcal{U},\text{~s.t.~} c^-_{j_1} = \max_{j\in\mathcal{U}} c_j^-,~c^+_{j_2} = \max_{j\in\mathcal{U}} c_j^+, \text{~and} \\
&\exists k_1\in\mathcal{U}_c-\{i\},\text{~s.t.~} (j_1k_1)\in\mathcal{E} \text{~(using Assumption~\ref{a:terminal_node})}. \label{def+-}
\end{split}
\end{equation}
Using the structure of $y$ and $H^{\mathcal{U}}_\beta$, we have, for $i\in\mathcal{U}_c$, {\footnotesize
\begin{align}
&\|y_R\|^2_2= \|y^T_R\|^2_2 = A +B +C \text{~where~}\label{long_f}\\
&A \coloneqq (1- \sum\limits_{j\in\mathcal{U}}c_jH_\beta(i,j))^2 = (1+ \sum\limits_{j\in\mathcal{U}}c^-_jH_\beta(i,j)-\sum\limits_{j\in\mathcal{U}}c^+_jH_\beta(i,j))^2\nonumber\\
&\hspace{6pt}\stackrel{(a_1)}{\geq}
{(1+\sum\limits_{j\in\mathcal{U}}c^-_jH_\beta(i,j))^+}^2\stackrel{(a_2)}{\geq} {(1-c^-_{j_1}H_\beta(i,i))^+}^2\label{A}\\
&B \coloneqq \sum\limits_{j\in\mathcal{U}}(-a_j -c_jH_\beta(j,j))^2= \sum\limits_{j\in\mathcal{U}}(c^+_jH_\beta(j,j)+a_j-c^-_jH_\beta(j,j))^2\nonumber\\
&\hspace{6pt}\stackrel{(b_1)}{\geq} (c^+_{j_2}H_\beta(j_2,j_2))^2\label{B}\\
&C\coloneqq \smashoperator[lr]{\sum\limits_{k\in\mathcal{U}_c-\{i\}}}(-a_k-\sum\limits_{j \in\mathcal{U}}c_jH_\beta(k,j))^2\geq (a_{k_1}+\sum\limits_{j \in\mathcal{U}}c_jH_\beta(k_1,j))^2\nonumber\\
&\hspace{6pt}\stackrel{(c_1)}{\geq} {(\sum\limits_{j \in\mathcal{U}}c^+_jH_\beta(k_1,j)-\sum\limits_{j \in\mathcal{U}}c^-_jH_\beta(k_1,j))^+}^2\nonumber\\
&\hspace{6pt}\stackrel{(c_2)}{\geq}{(-c^+_{j_2}H_\beta(k_1,k_1)-\sum\limits_{j \in\mathcal{U}}c^-_jH_\beta(k_1,j))^+}^2\nonumber\\
&\hspace{7pt}\geq{(c^-_{j_1}\beta_{k_1j_1}-c^+_{j_2}H_\beta(k_1,k_1))^+}^2 \nonumber\\
&\hspace{6pt}\stackrel{(c_3)}{\geq}{(c^-_{j_1}H_\beta(i,i)/\s_{id}-c^+_{j_2}H_\beta(j_2,j_2)\s_{id})^+}^2\label{C}
\end{align}
}
Here $(a_1),(b_1),(c_1)$ follow from Eq.~\eqref{ineq_1}, and $(a_2),(c_2)$ follow from Eqs.~\eqref{def+-},~\eqref{ineq_2}. $(c_3)$ follows from the definition of $\s_{id}$ in Eq.~\eqref{beta_min} and Eq.~\eqref{ineq_2}. Using Eqs.~\eqref{A},~\eqref{B},~\eqref{C} in Eq.~\eqref{long_f}, we have 
\begin{align}
  \|y_R\|^2_2 \geq {(1-x_1)^+}^2 + x_2^2+ {(x_1/\s_{id}-x_2\s_{id})^+}^2
\end{align}
where $x_1 = c^-_{j_1}H_\beta(i,i)\geq 0$ and $x_2 = c^+_{j_2}H_\beta(j_2,j_2)\geq 0$. Using the minimum value of quadratic functions, consider the cases for $x_1,x_2$ and their effect on $\|y_R\|^2_2$.\\
\begin{align}
  \text{(a)~} &x_2 \geq 1/\s_{id}^2:\nonumber\\ &\|y_R\|^2_2\geq 1/s^4_{id} \label{case_a}\\
\text{(b)~} &x_2 < 1/\s_{id}^2,~x_1\geq 1: \nonumber\\
&\|y_R\|^2_2 \geq \min\limits_{x_2} x_2^2+ (1/\s_{id}-x_2\s_{id})^2 = \frac{1}{s^4_{id}+s^2_{id}}\label{case_b}\\
\text{(c)~} &x_2 < 1/\s_{id}^2,~x_2s^2_{id}<x_1<1: \nonumber\\
&\|y_R\|^2_2 \geq \min\limits_{x_1, x_2} (1-x_1)^2 + x_2^2+ (x_1/\s_{id}-x_2\s_{id})^2\nonumber\\
&\hspace{24pt}= 1/(1+\s_{id}^4 +\s_{id}^2)\label{case_c}\\
\text{(d)~} &x_2 < 1/\s_{id}^2,~x_1\leq x_2s^2_{id}:\nonumber\\
&\|y_R\|^2_2 \geq \min\limits_{x_2} (1-x_2s^2_{id})^2+x_2^2 = 1/(1+\s_{id}^4)\label{case_d}
\end{align}
Using Eqs.~\eqref{case_a},~\eqref{case_b},~\eqref{case_c},~\eqref{case_d}, $\|y_R\|^2_2 \geq 1/(1+\s_{id}^4 +\s_{id}^2)$. Using this in Eq.~\eqref{ineq_U_c} completes the proof. 

\subsection{Proof of Theorem~\ref{thm:U_identify_3}}\label{sec:lem:U_identify_3}
Consider $i\in \mathcal{U}$. From the proof of Theorem \ref{thm:U_identify_1}, note that zero-cost solutions in Problem~\eqref{P:regression_nonoise_1} are given by $y^T = c^TH^{\mathcal{U}}_\beta$, where $y_i=1, y_{-i}<0$. 

$y_{j}\leq 0 \forall j \in \mathcal{U}-\{i\}$, necessitates $c_j\leq 0,\forall j\neq i$. Consider $j_1\in\mathcal{U}-\{i\}$. Under the assumption in the statement, $\exists k_1\in\mathcal{U}_c$ such that $(k_1j_1) \in \mathcal{E}$, but $(k_1i) \notin \mathcal{E}$. Thus, $y_{k_1} = -c_{j_1}\beta_{j_1k_1} + \sum_{j\in \mathcal{U}-\{i,j_1\}}c_jH_\beta(j,k_1)$, where $H_\beta(j,k_1) \leq 0$ follows from Eq.~\eqref{H_dd}. As $c_{j}$ is non-positive for $j\neq i$, $y_{k_1} \leq 0$ holds only if $c_{j_1}= 0$. Following the same reasoning, $\forall j \in \mathcal{U}-\{i\}$, $c_{j} =0$. Finally, $c_i = 1/H_\beta(i,i)$ is needed for $y_i =1$. Thus, $y = \frac{H^T_\beta(i,:)}{H_\beta(i,i)}$ is the unique zero-cost solution for $i \in \mathcal{U}$. It identifies the true neighbors of $i$, due to the structure of $H^T_\beta(i,:)$. 

\end{document}